\newcommand{\xx}{\mathbf{x}}
\DeclareMathOperator{\rank}{rank}
\DeclareMathOperator*{\rowspace}{rowspace}
\newtheorem{theorem}{\indent Theorem}[section]
\newtheorem{lemma}[theorem]{\indent Lemma}
\newtheorem{corollary}[theorem]{\indent Corollary}
\newtheorem{proposition}[theorem]{\indent Proposition}
\newtheorem{example}{\indent Example}[section]
\newtheorem{definition}{\indent Definition}[section]
\newtheorem{remark}{\indent Remark}[section]
\newcommand{\FFF}{\mathbb{F}}
\newcommand{\ffs}{{\mathbb{F}^\star}}
\newcommand{\AAA}{\mathbb{A}}
\DeclareMathOperator*{\diag}{diag}
\DeclareMathOperator*{\MAXRANK}{max-rank}
\DeclareMathOperator{\MINRANK}{min-rank}
\newcommand{\graph}{{\mathcal{G}}}
\newcommand{\cV}{{\mathcal{V}}}
\newcommand{\cH}{{\mathcal{H}}}
\newcommand{\cE}{{\mathcal{E}}}
\newcommand{\cB}{{\mathcal{B}}}
\newcommand{\cA}{{\mathcal{A}}}
\newcommand{\cN}{{\mathcal{N}}}
\newcommand{\cP}{{\mathcal{P}}}
\newcommand{\cQ}{{\mathcal{Q}}}
\newcommand{\cT}{{\mathcal{T}}}
\newcommand{\distance}{{\mathsf{d}}}
\newcommand{\sP}{{\mathsf{P}}}
\newcommand{\clc}{{\mathsf{cc}}}
\newcommand{\ff}{{\mathbb{F}}}
\newcommand{\nn}{{\mathbb N}}
\newcommand{\pp}{{\mathbb P}}
\newcommand{\bldA}{{\mbox{\boldmath $A$}}}
\newcommand{\bldAA}{{\mbox{\scriptsize \boldmath $A$}}}
\newcommand{\bldB}{{\mbox{\boldmath $B$}}}
\newcommand{\bldb}{{\mbox{\boldmath $b$}}}
\newcommand{\bldD}{{\mbox{\boldmath $D$}}}
\newcommand{\bldDD}{{\mbox{\scriptsize \boldmath $D$}}}
\newcommand{\blde}{{\mbox{\boldmath $e$}}}
\newcommand{\bldE}{{\mbox{\boldmath $E$}}}
\newcommand{\bldEE}{{\mbox{\scriptsize \boldmath $E$}}}
\newcommand{\bldI}{{\mbox{\boldmath $I$}}}
\newcommand{\blds}{{\mbox{\boldmath $s$}}}
\newcommand{\bldM}{{\mbox{\boldmath $M$}}}
\newcommand{\bldT}{{\mbox{\boldmath $T$}}}
\newcommand{\bldUpsilon}{{\mbox{\boldmath $\Upsilon$}}}
\newcommand{\bldP}{{\mbox{\boldmath $P$}}}
\newcommand{\bldv}{{\mbox{\boldmath $v$}}}
\newcommand{\bldu}{{\mbox{\boldmath $u$}}}
\newcommand{\bldw}{{\mbox{\boldmath $w$}}}
\newcommand{\bldx}{{\mbox{\boldmath $x$}}}
\newcommand{\bldy}{{\mbox{\boldmath $y$}}}
\newcommand{\tauopt}{{\mbox{$\tau_{\mbox{\scriptsize opt}}$}}}
\newcommand{\bldzero}{{\mbox{\boldmath $0$}}}
\newcommand{\bldone}{{\mathbf{1}}}
\title{Data Dissemination Problem in Wireless Networks}
\author{{\bf Ivo Kubjas and Vitaly Skachek} \\ 
Institute of Computer Science \\
University of Tartu, Tartu 50409, Estonia
}
\begin{document}

\addtolength{\topmargin}{18pt}
\addtolength{\textheight}{-18pt}
\maketitle

\begin{abstract}
In this work, we formulate and study a data dissemination problem, which can be viewed as a generalization of 
the index coding problem and of the data exchange problem to networks with an arbitrary topology. 
We define $r$-solvable networks, in which data dissemination can be achieved in $r > 0$ communications rounds. 
We show that the optimum number of transmissions for any one-round communications scheme is given 
by the minimum rank of a certain constrained family of matrices. 
For a special case of this problem, called bipartite data dissemination problem, we present lower and upper 
graph-theoretic bounds on the optimum number of transmissions. 
For general $r$-solvable networks, 
we derive an upper bound on the minimum number of transmissions in any scheme with $\geq r$ rounds. 
We experimentally compare the obtained upper bound to a simple lower bound.
\end{abstract}

\begin{IEEEkeywords}
Data dissemination, data exchange, index coding. 
\end{IEEEkeywords}

\section{Introduction}

A problem of \emph{index coding with side information} considers a communications scenario with one broadcast transmitter 
and several receivers. All receivers possess some partial information available to the transmitter and request additional information. 
The goal is to design a communications scheme, which minimizes the total number of transmissions. 

Index coding problem was proposed first in~\cite{BirkKol2006}: it was suggested therein to use coding in order to minimize a 
number of transmissions. Later, in~\cite{Yossef-journal}, the minimum number of transmissions in the index coding problem 
was shown to be equal to the minimum rank of a properly defined family of matrices. Generally, computing the minimum rank of a family of the 
matrices is an NP-hard problem, yet in some special cases there exist efficient algorithms to compute it~\cite{Yossef-journal,Dau-2014}.

Index coding problem was intensively studied in the recent years, see for example~\cite{Alon, ChaudhrySprintson,Dau-2012,Dau-2013,Katti2006,Maddah-Ali-2013}. It was shown in~\cite{Effros, Rouayheb2009} that index coding problem is equivalent to a network coding 
problem~\cite{Ahlswede, KoetterMedard2003}. In index coding, however, the underlying network graph is very simple, it is a directed star graph,
where the transmitter is the root of that graph.  

A variation of index coding, termed \emph{data exchange problem}, was studied in~\cite{ELRouhayeb-2010}. In the data exchange problem, unlike the index coding problem, every node can serve as both a transmitter and a receiver. The underlying network graph is a complete directed graph. 
Before the communications take place, each node possesses some partial information. The goal is to deliver all information to all 
the nodes in a minimum number of transmissions. It was shown in~\cite{ELRouhayeb-2010} that the minimum number of transmissions in 
the data exchange problem can also be described as a rank minimization problem of a certain constrained family of matrices, thus 
resembling some of the results for index coding.    

Another related problem is a \emph{set reconciliation}~\cite{Eppstein, Gabrys, Ivo-thesis, Minsky, Mitzenmacher, Skachek-Rabbat}. 
The set reconciliation problem is usually defined over a network of arbitrary topology, either wired or wireless. 
In that problem, similarly to the data exchange problem, the goal is to deliver all information to all the nodes. 
However, by contrast, it is assumed that no node knows what information is possessed by the other nodes. This makes the set reconciliation problem 
more difficult than the data exchange problem. 

In this work, we introduce a \emph{data dissemination problem}, which further generalizes both the index coding and the data exchange problems, such that the underlying directed connectivity graph of the network is an arbitrary graph. This model, in particular, represents 
cached networks of arbitrary topology. The data dissemination problem can also be viewed as a generalization of the set reconciliation 
problem. In data dissemination problem, every node can serve as both a transmitter and a receiver. Moreover, each node possesses some partial information and requests some additional information. 

\begin{example}
Consider an example network in Figure~\ref{ex:network}. 
There are five nodes $v_1$, $v_2$, $v_3$, $v_4$ and $v_5$, which in total possess three bits of information $x_1$, $x_2$, $x_3$. If $v_1$ transmits $x_1 + x_2$
and $v_2$ transmits $x_2 + x_3$, then the requests of all nodes will be satisfied with only two transmissions.

\begin{figure}[htb]
\begin{picture}(100, 140)(20,-15)

\put(140, 20){\circle*{4}}
\put(60, 20 ){\circle*{4}}
\put(220, 20){\circle*{4}}

\put(100, 100){\circle*{4}}
\put(180, 100){\circle*{4}}

\put(100, 100){\vector(1,-2){40}}
\put(180, 100){\vector(1,-2){40}}
\put(100, 100){\vector(-1,-2){40}}
\put(180, 100){\vector(-1,-2){40}}
\put(180, 100){\vector(-3,-2){120}}

\small

\put(96, 114){\makebox{$v_1$}}
\put(80, 105){\makebox{has $x_1$, $x_2$}}

\put(176, 114){\makebox{$v_2$}}
\put(160, 105){\makebox{has $x_2$, $x_3$}}

\put(58, 10){\makebox{$v_3$}}
\put(50, 2){\makebox{has $x_1$}}
\put(32, -6){\makebox{requests $x_2$, $x_3$}}

\put(138, 10){\makebox{$v_4$}}
\put(130, 2){\makebox{has $x_2$}}
\put(112, -6){\makebox{requests $x_1$, $x_3$}}

\put(218, 10){\makebox{$v_5$}}
\put(202, 2){\makebox{has $x_1$, $x_3$}}
\put(202, -6){\makebox{requests $x_2$}}

\put(100, 87){\oval(50, 30)[b]}
\put(20, 90){\makebox{transmits $x_1 + x_2$}}

\put(180, 87){\oval(50, 30)[b]}
\put(190, 90){\makebox{transmits $x_2 + x_3$}}
\end{picture}
\caption{Example network}
\label{ex:network}
\end{figure}
\end{example}

In this work, we present the following results.
First, we formulate and study a data dissemination problem. 
We define $r$-solvable networks, in which data dissemination can be achieved in $r > 0$ communications rounds. 
We show that the optimal number of transmissions for any one-round communications scheme is given 
by the minimum rank of a certain constrained family of matrices. 
For a special case of this problem, termed bipartite data dissemination problem, we present lower and upper 
graph-theoretic bounds on the optimum number of transmissions. 
For general $r$-solvable networks, 
by using similar techniques, we derive an upper bound on the minimum number of transmissions in $\ge r$ rounds. 
We experimentally compare the obtained upper bound to a simple lower bound.

\section{Notation}

\addtolength{\topmargin}{-18pt}
\addtolength{\textheight}{18pt}

Denote $[n] \triangleq \{1, 2, \cdots, n\}$ (in particular, $[0]$ denotes the empty set). We use $\bldzero$ to denote the all-zero vector, when the length of the vector is clear from the context. Similarly, we use $\blde_i$ to denote the unit vector which has $1$ in position $i$ and zeros everywhere else. We assume hereafter that all vectors are column vectors. 

Let $\ff$ be a finite field $\ff_q$, where $q$ is a prime power.  
Take $\bldA$ to be a matrix over $\ff$. Denote by $\bldA^{[i]}$ the $i$-th row of $\bldA$ and 
by $(\bldA)_{i,j}$ the entry in the $i$-th row and $j$-th column of $\bldA$. 
We use the notation $\rowspace(\bldA)$ to denote the row space of the matrix $\bldA$, and notation $\bldA \otimes \bldB$ for the 
standard tensor product of the matrices $\bldA$ and $\bldB$. For the row vector $\bldv = (v_1, v_2, \cdots, v_n)$, we denote 
by $\diag(\bldv)$ the $n \times n$ matrix as follows: 
\[
\left( \diag(\bldv) \right)_{i,j} = \left\{ \begin{array}{cc}
v_i & \mbox{if } i = j \\
0 & \mbox{otherwise}
\end{array} \right. \; . 
\]

Fix an ambient vector space $V \subseteq \ff^n$. Let $W$ be a subspace of $V$. 
The orthogonal vector space of $W$ is given by 
    \[
        W^\perp \triangleq \{ \bldv \in V \; | \; \forall \bldw \in W \, : \, \bldv \cdot \bldw = \bldzero \} \; , 
    \]
where $\bldv \cdot \bldw$ denotes the inner product of the two vectors. 

Let $U, W \subseteq V$ be two vector subspaces. Define
\[
U + W = \{ \bldu + \bldw \; | \; \bldu \in U \mbox{ and } \bldw \in W \} \subseteq V \; . 
\]
If $U \cap W = \{ \bldzero \}$, then we also write $U \oplus W$ instead of $U + W$. 

Let $\graph(\cV, \cE)$ be a directed graph with the vertex set $\cV$ and the edge set $\cE$. For each $\ell \in \cV$, introduce the notations
\begin{multline*}
\cN_{in} (\ell) = \{ v \in \cV \; : \; (v, \ell) \in \cE \} \\
\mbox{ and } \quad 
\cN_{out} (\ell) = \{ v \in \cV \; : \; (\ell, v) \in \cE \} \; . 
\end{multline*}

Let $\bldE$ be the all-one square matrix. The size of $\bldE$ will be apparent from the context. 
Similarly, let $\bldI$ be the identity matrix. Finally, denote by $\bldone_n$ the all-one column vector of length $n$.

\section{Problem setup}

First, we define the data dissemination problem.
Consider a wireless network with a topology given by a finite directed connected graph $\graph(\cV, \cE)$, 
where $\cV = [k]$ is the set of nodes and $\cE$ is the set of edges. 
Let $\xx=(x_1, \ldots x_n) \in \ff^n$ be an information vector. 
Each node $\ell \in \cV$ possesses some side information consisting of the symbols $x_j$,   
$j \in \cP_\ell \subseteq [n]$, and is interested in receiving all of the symbols $x_i$, 
$i \in \cT_\ell \subseteq [n] \backslash \cP_\ell$. 

In the data dissemination problem, the goal is to find a coded transmission
protocol with the minimum number 
of transmissions, such that all nodes could recover all their respective requested symbols. 
However, unlike in~\cite{ELRouhayeb-2010}, the network might not have full connectivity. 

Throughout this paper we make the following assumptions. 
\begin{itemize}
\item
The graph $\graph$ is an arbitrary directed graph. 
\item
All transmissions are broadcast, i.e. the messages transmitted by the node $\ell$ are always received by all the nodes in $\cN_{out}(\ell)$.
\item
The coding is linear, i.e. each node $\ell \in \cV$ transmits messages of the form $\sum_{i \in [n]} \mu_i \cdot x_i$, where $\mu_i \in \ff$ for all $i \in [n]$. 
\item
There is a central entity that knows $\graph$ and all the sets $\cP_i$ and $\cT_i$ for all $i \in \cV$. This entity is running an algorithm for finding an optimal communications scheme. 
\item The transmissions are without errors and interference, i.e. all transmissions are received correctly. This can be achieved by separately handling error-correction and interference in the lower layers. 
\item There are no parallel edges in the network graph. This assumption simplifies the analysis, yet similar analysis can be done if the graph has parallel edges.  
\end{itemize}

Hereafter, we also assume that the transmissions are \emph{performed in rounds}. 
During a round, each node transmits linear combinations of the symbols it possesses, 
but it can not use any symbol received in the same round. 

In Figure~\ref{fig:protocol}, we formally define Protocol $\sP$. Let $r$ be an integer parameter, which denotes the number of rounds in the protocol. For all $\ell \in \cV$ and for all $i = 0, 1, 2, \cdots, r$, denote by $\cQ_\ell^{(i)}$ the 
set of symbols in $\ff$ possessed by the node $\ell$ at the end of round $i$ (here, $\cQ_\ell^{(0)} \triangleq \{ x_j \}_{j \in \cP_\ell}$ denotes the set of symbols 
possessed by the node $\ell$ before the execution of the protocol). 

\begin{figure}[hbt]
\makebox[0in]{}\hrulefill\makebox[0in]{}
{\normalsize\sl
\begin{description}
\item[{\bf For $i = 1$ to $r$}] \hspace{8ex} {\bf do} \{  
\begin{description}
\item[\underline{\bf Transmitting phase:}] \hspace{13ex} 
each node $\ell \in \cV$ broadcasts $\tau_{i,\ell}$ linear combinations 
$z_{i,\ell,j} = \sum_{x \in \cQ_\ell^{(i-1)}} \mu_{x, i, \ell, j} \cdot x$, \; $j = 1, 2, \cdots, \tau_{i,\ell}$.
\item[\underline{\bf Receiving phase:}] \hspace{9.7ex}
each node $\ell \in \cV$ updates $\cQ_\ell^{(i)} = \cQ_\ell^{(i-1)} \cup 
\{ z_{i,v,j} \}_{ v \in \cN_{in}(\ell), \; j = 1, 2, \cdots, \tau_{i,v}}$.  
\end{description}
\item[] \}
\item[\underline{\bf Recovery phase:}] \hspace{9ex} 
each node $\ell \in \cV$ computes 
$x_j = \sum_{x \in \cQ_\ell^{(r)}} \mu_{x, \ell, j} \cdot x$ for all $j \in \cT_\ell$.
\end{description}
}
\makebox[0in]{}\hrulefill\makebox[0in]{}
\caption{Protocol $\sP$.}
\label{fig:protocol}
\end{figure}

Here, the integers $\tau_{i,\ell}$ denote the number of symbols in $\ff$ transmitted by
the node $\ell \in \cV$ in round $i \in [r]$. The coefficients $\mu_{x, i, \ell, j} \in \ff$ are 
chosen to multiply the symbols $x \in \cQ_\ell^{(i-1)}$ in the $j$-th linear combination, $j \in [\tau_{i,\ell}]$, $\ell \in \cV$, $i \in [r]$,
and similarly $\mu_{x, \ell, j} \in \ff$ are chosen to multiply the symbols $x \in \cQ_\ell^{(r)}$ in the $j$-th linear combination, $j \in \cT_\ell$.
\medskip 

We note that both the index coding problem~\cite{BirkKol2006} and the data exchange problem~\cite{ELRouhayeb-2010} can be 
viewed as special cases of Protocol $\sP$, for $r = 1$. 

\begin{definition}
\label{def:feasible}
    Consider a network based on the graph $\graph(\cV, \cE)$. 
		The assignment of the sets $\cP_i$ and $\cT_i$ is called \emph{feasible} 
		if for any $j \in \cT_i$, $i \in \cV$, there exists 
		a node $\ell \in \cV$ with $j \in \cP_\ell$, such that there is a finite directed path from $\ell$ to $i$ in $\graph$. 
\end{definition}
It readily follows from Definition~\ref{def:feasible} that for any feasible assignment there exists a selection of the integers $\tau_{i,\ell}$ and of the coefficients $\mu_{x, i, \ell, j}$ and $\mu_{x, \ell, j}$ in Protocol $\sP$, such that all requests are satisfied in a finite number of rounds (for example, by simply forwarding the requested bits in $\graph$, 
so all the coefficients are either 0 or 1). 
		
\begin{definition}
\label{def:r-solvable}
    The network based on the graph $\graph(\cV, \cE)$ is said to be $r$-\emph{solvable}, $r \in \nn$,
		if it is strongly connected and for any feasible assignment of the sets $\cP_i$ and $\cT_i$, $i \in \cV$, $r$ communications rounds are sufficient for the protocol to satisfy all the node requests, but $r-1$ rounds are not sufficient. If the network is not $r$-solvable for any $r \in \nn$, then we say that it is \emph{not solvable}. 
\end{definition}

\begin{lemma} 
The network is $r$-solvable for some $r \in \nn$ if the maximum of the shortest length of the directed path from the node $i$ to the node $\ell$ in $\graph$, for any two nodes $i, \ell \in \cV$, is exactly $r$. 
\label{lemma:shortest-path}
\end{lemma}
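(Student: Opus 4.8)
The plan is to verify the two requirements in Definition~\ref{def:r-solvable} separately: that $r$ rounds always suffice for every feasible assignment, and that some feasible assignment genuinely needs all $r$ rounds (so $r-1$ rounds fail). First I would observe that the hypothesis $\max_{i,\ell \in \cV}\dist(i,\ell) = r$ being a finite natural number already forces $\graph$ to be strongly connected: if some ordered pair admitted no directed path, the maximum would be undefined. This discharges the strong-connectivity clause of the definition for free.

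For the sufficiency direction I would use plain flooding, in which during every round each node $\ell$ rebroadcasts, as separate transmissions, every raw symbol it currently holds (all coefficient vectors equal to unit vectors $\blde_j$). A short induction on the round index $t$ shows that after $t$ rounds a node $v$ possesses every symbol $x_j$ for which some node holding $x_j$ at the start admits a directed path into $v$ of length at most $t$. Now fix any feasible assignment and any request $j \in \cT_i$. By Definition~\ref{def:feasible} there is a node $\ell$ with $j \in \cP_\ell$ and a directed path from $\ell$ to $i$, whose shortest length is at most the diameter $r$. Hence after $r$ rounds node $i$ possesses $x_j$; since this holds for every request, $r$ rounds suffice to satisfy all nodes.

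For the necessity direction I would exhibit a single feasible assignment that cannot be completed in $r-1$ rounds. Choose an ordered pair $(a,b)$ attaining the diameter, so $\dist(a,b) = r$, and set $\cP_a = \{1\}$ and $\cT_b = \{1\}$ with $1 \notin \cP_\ell$ for every $\ell \neq a$ and all other request sets empty; feasibility holds because $\graph$ is strongly connected. To each symbol a node holds at the end of round $t$ associate its coefficient vector in $\ff^n$, and let $S_v^{(t)} \subseteq \ff^n$ be the span of these vectors over $x \in \cQ_v^{(t)}$; node $v$ can recover $x_1$ exactly when $\blde_1 \in S_v^{(t)}$. I claim that whenever $\dist(a,v) > t$ one has $S_v^{(t)} \subseteq H$, where $H = \{\mu \in \ff^n : \mu_1 = 0\}$ is the hyperplane of coefficient vectors with vanishing first coordinate. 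The base case $t=0$ holds since $x_1$ sits only at $a$, so $S_v^{(0)} = \vspan\{\blde_j : j \in \cP_v\} \subseteq H$ for every $v \neq a$. For the inductive step, any symbol $v$ acquires in round $t$ is a linear combination of symbols held at round $t-1$ by $v$ itself or by an in-neighbour $u \in \cN_{in}(v)$; every such $u$ satisfies $\dist(a,u) > t-1$, since otherwise the edge $u \to v$ would give $\dist(a,v) \le t$. By the induction hypothesis all the relevant coefficient vectors lie in $H$, whence $S_v^{(t)} \subseteq H$. Applying this to $b$ with $t = r-1 < \dist(a,b)$ yields $S_b^{(r-1)} \subseteq H$, so $\blde_1 \notin S_b^{(r-1)}$ and $b$ cannot recover $x_1$ within $r-1$ rounds. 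Together with the sufficiency direction this shows the network is exactly $r$-solvable.

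The main obstacle is the propagation (cut) bound in the necessity direction: one must argue rigorously that the linear processing permitted by Protocol~$\sP$ cannot manufacture any information about $x_1$ outside the directed distance-$t$ ball around its unique source $a$. The clean way to control this is precisely the induction above, tracking the single coordinate $\mu_1$ of the coefficient vectors and noting that the hyperplane $H$ is preserved under the only two operations the protocol allows, namely forming linear combinations of already-held symbols and receiving such combinations from in-neighbours. The remaining flooding estimate in the sufficiency direction is routine.
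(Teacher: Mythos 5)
Your proof is correct and follows the same two-part structure as the paper's: flooding (every node rebroadcasts all held symbols each round) for the sufficiency of $r$ rounds, and a diameter-attaining pair $(a,b)$ with a single-source assignment to defeat $r-1$ rounds. The one substantive difference is in the necessity step: the paper simply asserts that a node at distance $i+1$ from the source ``can not know $x_1$'' after $i$ rounds because ``the same symbol is not retransmitted within the same round,'' which as stated really only addresses forwarding; your hyperplane invariant $S_v^{(t)} \subseteq H$ for all $v$ with $\dist(a,v) > t$ is a rigorous cut argument that covers \emph{every} linear coding scheme permitted by Protocol $\sP$, which is what Definition~\ref{def:r-solvable} actually demands. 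You also explicitly discharge the strong-connectivity clause (finite diameter implies strong connectivity) and verify feasibility of your adversarial assignment, both of which the paper leaves implicit. In short: same route, but your write-up closes a genuine informality in the paper's necessity argument at the cost of a slightly longer induction.
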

\vspace{-1ex}
The proof of this lemma appears in the appendix. 
\medskip

We define the transposed $k \times k$ integer adjacency matrix $\bldD$ of the graph $\graph(\cV,\cE)$ as follows: 
\[
    (\bldD)_{i,j} = \left\{ \begin{array}{cc}
		1 & \mbox{if } (j, i) \in \cE \\
		0 & \mbox{otherwise} 
		\end{array} \right. \; . 
\]
\begin{corollary}
The network is $r$-solvable if $r$ is the smallest integer such that all the entries in the matrix $\bldD^r$ are strictly positive.  
\end{corollary}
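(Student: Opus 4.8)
The plan is to reduce the corollary to Lemma~\ref{lemma:shortest-path} by showing that the smallest integer $r$ for which $\bldD^r$ has all entries strictly positive coincides with $\max_{i,\ell \in \cV} \dist(i,\ell)$, the largest shortest-path length taken over all ordered pairs of nodes. Once this identification is established, the statement follows immediately from the lemma, which asserts that the network is $r$-solvable whenever that maximum equals $r$.

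First I would pin down the combinatorial meaning of the powers of $\bldD$. Since $(\bldD)_{i,j} = 1$ exactly when $(j,i) \in \cE$, the matrix $\bldD$ is the adjacency matrix read with the row index as the head of an edge. A routine induction on $r$, using $(\bldD^r)_{i,j} = \sum_{k} (\bldD)_{i,k}\,(\bldD^{r-1})_{k,j}$, then shows that $(\bldD^r)_{i,j}$ counts the directed walks of length exactly $r$ from $j$ to $i$ in $\graph$. In particular, $(\bldD^r)_{i,j} > 0$ if and only if such a walk exists.

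The key step is to translate ``a walk of length exactly $r$'' into ``a path of length at most $r$''. This is where the data-dissemination structure enters: because in Protocol $\sP$ every node retains all symbols it already possesses from one round to the next, i.e. $\cQ_\ell^{(i)} \supseteq \cQ_\ell^{(i-1)}$, every vertex carries an effective self-loop. With a self-loop at each vertex, a walk of length exactly $r$ from $j$ to $i$ exists if and only if $\dist(j,i) \le r$: one direction is trivial, and for the other one takes a shortest $j \to i$ path and pads it to length exactly $r$ by repeating a self-loop. Consequently $\bldD^r$ is strictly positive in every entry precisely when $\dist(j,i) \le r$ holds for all ordered pairs, i.e. when $r \ge \max_{i,\ell}\dist(i,\ell)$. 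It follows in addition that the set of exponents $r$ with $\bldD^r > 0$ is upward closed, so the smallest such $r$ is well defined and equals $\max_{i,\ell}\dist(i,\ell)$.

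Finally I would invoke Lemma~\ref{lemma:shortest-path}: since the smallest $r$ with $\bldD^r > 0$ equals the maximum over ordered pairs of the shortest-path length, and that maximum being $r$ forces the network to be $r$-solvable, the corollary is proved. I expect the main obstacle to be the self-loop observation in the third paragraph: one must justify that retention of one's own symbols genuinely acts as a unit self-loop, so that the relevant exponent of $\bldD$ collapses to the diameter rather than to the generally larger (Wielandt-type) primitivity index. Without this, walks of length exactly $r$ and shortest paths of length at most $r$ would not coincide, and the clean identification with $\max_{i,\ell}\dist(i,\ell)$ would fail.
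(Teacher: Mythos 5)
Your overall route is the intended one: the paper states this corollary without proof, as an immediate consequence of Lemma~\ref{lemma:shortest-path}, and your reduction---interpret $(\bldD^r)_{i,j}$ as the number of length-$r$ walks from $j$ to $i$, identify the smallest entrywise-positive power of $\bldD$ with the maximum shortest-path length over ordered pairs, then invoke the lemma---is exactly the missing argument.

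The one genuine weak point is where you anticipated it, but your repair is of the wrong kind. The identity ``smallest $r$ with $\bldD^r>0$ equals $\max_{i,\ell}\dist(i,\ell)$'' is a statement about the matrix $\bldD$ alone, and $\bldD$ is defined purely from $\cE$; the protocol's retention property $\cQ_\ell^{(i)}\supseteq\cQ_\ell^{(i-1)}$ changes what nodes \emph{know}, not what $\bldD$ \emph{is}, so it cannot supply the self-loops your padding argument needs. What that step actually requires is the convention that every diagonal entry of $\bldD$ equals $1$ (equivalently, every node has a self-loop in $\cE$)---a convention the paper makes explicit only later, inside the proof of Lemma~\ref{lemma:existance} (``The transposed adjacency matrix $\bldD$ has ones in the main diagonal''). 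This is not pedantry: without unit diagonal the corollary itself is false, not merely your proof of it. Take $\cV=\{1,2,3\}$ with edges $1\to 2$, $2\to 3$, $3\to 1$ and a self-loop at node $1$ only. The maximum shortest-path length is $2$, so by Lemma~\ref{lemma:shortest-path} the network is $2$-solvable; but the smallest $r$ with $\bldD^r$ entrywise positive is $4$ (the primitivity exponent, which in general exceeds the diameter), and the network is not $4$-solvable, since $r$-solvability demands that $r-1$ rounds do \emph{not} suffice. So keep your walk-counting argument, but ground the self-loop step in a standing unit-diagonal assumption on $\bldD$ rather than in the protocol semantics.
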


\section{Optimal solution for one round data exchange problem}
\label{sec:one_round}

Let the graph $\graph(\cV, \cE)$, the information vector $\xx=(x_1, \ldots x_n) \in \ff^n$, and 
the sets $\cP_\ell$ and $\cT_\ell$ for $\ell \in \cV$ be defined as above. 
We represent a matrix family $\AAA$ over $\ff$ as a matrix over $\ff \cup \{ \mbox{`$\star$'} \}$, where
`$\star$' is a special symbol. The entry, which can take any value from $\ff$ in $\AAA$ is 
marked as `$\star$'. 

For each node $\ell \in \cV$, we define the family $\AAA_\ell$ of $n \times n$ matrices as follows. 
\begin{equation}
\left( \AAA_\ell \right)_{i, j} = \left\{ \begin{array}{cc}
\mbox{`$\star$'} & \mbox{if } j \in \cP_\ell \\
0 & \mbox{otherwise}
\end{array} \right. \; . 
\label{eq:A-i}
\end{equation}
Define the family $\AAA$ of $(k n) \times n$ block matrices as:
\begin{equation}
    \AAA \triangleq  \left[ \begin{array}{c} \AAA_1 \\ \hline \AAA_2 \\ \hline \vdots \\ \hline \AAA_{k} \end{array} \right] \; .
		\label{eq:possession-matrix}
\end{equation}
Given $\bldA \in \AAA$, the $j$-th $n \times n$ sub-matrix of $\bldA$ will be denoted as $\bldA_j$. We will also use the notation 
$\bldA_{\cN_{in}(\ell)}$ to denote the $d n \times n$ matrix 
\begin{equation*}
\bldA_{\cN_{in}(\ell)} = \left[ \begin{array}{c}
\bldA_{i_1} \\
\hline
\bldA_{i_2} \\
\hline
\vdots \\
\hline
\bldA_{i_d} 
\end{array} \right] \; ,
\end{equation*}
where $\cN_{in}(\ell) = \{ i_1, i_2, \cdots, i_d \}$, and $d$ is an in-degree of $\ell$ in $\graph$.  

For each $\ell \in \cV$, we define an $n \times n$ \emph{information matrix} $\bldP_\ell = (\bldP_\ell)_{i \in [n], j \in [n]}$, 
\[
(\bldP_\ell)_{i, j} = \left\{ \begin{array}{ll}
1 & \mbox{if } i = j \mbox { and } i \in \cP_\ell \\
0 & \mbox{otherwise}
\end{array} \right. \; . 
\]

Similarly, for each $\ell \in \cV$, we define an $n \times n$ \emph{query matrix} $\bldT_\ell = (\bldT_\ell)_{i \in [n], j \in [n]}$, 
\[
(\bldT_\ell)_{i, j} = \left\{ \begin{array}{ll}
1 & \mbox{if } i = j \mbox { and } i \in \cT_\ell \\
0 & \mbox{otherwise}
\end{array} \right. \; . 
\]

In \cite{Yossef-journal}, it was shown that if the network is a star graph, then
the min-rank of the side information graph yields the optimal number of
transmissions. We extend this result to
all networks where the transmitters and receivers may possess and request
arbitrary information bits such that it is possible to perform the transmissions
in one round.

\begin{theorem}
\label{thm:coding}
    Consider a wireless network defined by the graph $\graph(\cV, \cE)$.
		Let $\AAA$ be an $nk \times n $ matrix family defined as above. For all nodes $\ell \in \cV$,
    let $\bldP_\ell$ and $\bldT_\ell$ be the corresponding possession and query matrices. Then, the
    minimal number of transmissions in Protocol $\sP$ needed to satisfy the demands of all nodes in 
		$\cV$ in \emph{one round} of communications is 
    \begin{equation}
        \tau = \min_{\bldAA \in \AAA} \left\{ \sum_{\ell \in \cV} \rank \left( \bldA_\ell \right) \right\} \; ,
				\label{eq:tau-expression}
    \end{equation}
    where for all $\ell \in \cV$
    \begin{equation}
        \rowspace \left( \left[ \begin{array}{c} \bldA_{\cN_{in}(\ell)} \\ \hline \bldP_\ell \end{array} \right]
        \right) \supseteq\rowspace(\bldT_\ell) \; . 
				\label{ex_condition}
    \end{equation}
If the above matrix $\bldA \in \AAA$ as above does not exist then there is no algorithm that satisfies all requests in one round. 
\end{theorem}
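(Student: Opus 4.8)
The plan is to set up a two-way dictionary between one-round realizations of Protocol~$\sP$ and matrices $\bldA \in \AAA$ satisfying the constraint~\eqref{ex_condition}, under which the number of transmissions of a scheme equals $\sum_{\ell \in \cV} \rank(\bldA_\ell)$; the formula~\eqref{eq:tau-expression} then follows by optimizing over this dictionary. The central fact, which I would isolate as a lemma, is that a node which knows $\bldR\xx$ (the values of the linear functionals forming the rows of a matrix $\bldR$) can linearly recover $x_j = \blde_j \cdot \xx$ for \emph{every} information vector $\xx \in \ff^n$ if and only if $\blde_j \in \rowspace(\bldR)$. Since recovery in Protocol~$\sP$ is linear, $x_j$ is computed as a fixed linear combination of the entries of $\bldR\xx$, so the identity $x_j = \blde_j \cdot \xx$ can hold for all $\xx$ exactly when $\blde_j$ is that same combination of the rows of $\bldR$, which is the stated membership.

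Next I would interpret the objects. In a one-round scheme each node $\ell$ broadcasts $\tau_\ell$ messages $\bldc_{\ell,s} \cdot \xx$; because $\ell$ may only combine symbols it possesses, every coefficient vector $\bldc_{\ell,s}$ is supported on $\cP_\ell$. Collecting the $\bldc_{\ell,s}$ as rows and padding with zero rows yields an $n \times n$ block $\bldA_\ell$ whose support lies in the columns indexed by $\cP_\ell$, hence a member of the family $\AAA_\ell$ of~\eqref{eq:A-i}; stacking these gives $\bldA \in \AAA$. By the broadcast assumption node $\ell$ receives precisely the functionals in the rows of $\bldA_{\cN_{in}(\ell)}$, and it already knows the functionals $\blde_i \cdot \xx$ for $i \in \cP_\ell$, that is, the rows of $\bldP_\ell$. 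Thus its total linear knowledge is the row space of the stacked matrix in~\eqref{ex_condition}, and by the lemma all requests $j \in \cT_\ell$ are simultaneously satisfiable if and only if~\eqref{ex_condition} holds.

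With this dictionary, achievability is immediate: take any $\bldA \in \AAA$ attaining the minimum in~\eqref{eq:tau-expression} subject to~\eqref{ex_condition}, and let each node $\ell$ broadcast a basis of $\rowspace(\bldA_\ell)$, which is again supported on $\cP_\ell$ and therefore transmittable, using exactly $\rank(\bldA_\ell)$ messages. Condition~\eqref{ex_condition} together with the lemma lets every node recover its requested symbols, so the scheme succeeds at total cost $\sum_{\ell \in \cV} \rank(\bldA_\ell) = \tau$. For the converse I would start from an arbitrary successful one-round scheme, build the associated $\bldA \in \AAA$ as above, and observe that successful recovery forces~\eqref{ex_condition} at every node; since $\rank(\bldA_\ell) \le \tau_\ell$ (passing to a row basis never increases the number of rows), the scheme uses at least $\sum_{\ell \in \cV} \rank(\bldA_\ell) \ge \tau$ transmissions. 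Combining the two bounds shows the minimum equals $\tau$. The same dictionary yields the last assertion: if no $\bldA \in \AAA$ satisfies~\eqref{ex_condition}, then no one-round scheme can succeed, as any such scheme would produce exactly such an $\bldA$.

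I expect the only delicate step to be the ``only if'' half of the recovery lemma, namely converting decodability-for-all-inputs into the membership $\blde_j \in \rowspace(\bldR)$. The robust way is via the duality $\ker(\bldR) = \rowspace(\bldR)^\perp$ together with the inclusion-reversing involution $W \mapsto W^\perp$: decodability is equivalent to $\ker(\bldR) \subseteq \vspan(\blde_j)^\perp$, which in turn is equivalent to $\vspan(\blde_j) \subseteq \rowspace(\bldR)$. I would be careful to note that these identities use only non-degeneracy of the standard bilinear form on $\ff^n$, and therefore hold verbatim over the finite field $\ff$: although $V = W \oplus W^\perp$ can fail there, the facts $(W^\perp)^\perp = W$ and $\dim W^\perp = n - \dim W$ remain valid, which is all the argument needs.
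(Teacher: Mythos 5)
Your proposal is correct and follows essentially the same route as the paper's proof: achievability by having each node broadcast a basis of $\rowspace(\bldA_\ell)$ (necessarily supported on $\cP_\ell$), and the converse by assembling the coefficient vectors of an arbitrary successful one-round scheme into a member of $\AAA$, noting $\rank(\bldA_\ell)\le\tau_\ell$, and showing that decodability forces~\eqref{ex_condition}. The only cosmetic difference is that you justify the key recovery lemma directly via the kernel/row-space duality and $(W^\perp)^\perp = W$, whereas the paper packages the same algebra as an indistinguishability argument (a node cannot tell some $\bldx \in W_\ell^\perp$ apart from $\bldzero$); both rest, as you note, on non-degeneracy of the standard form over $\ff$ rather than on $V = W \oplus W^\perp$.
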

\vspace{-1ex}
The proof of this theorem appears in the appendix. 
\medskip

It is straightforward to see that when $\graph$ is a complete directed graph, then the result of Theorem~\ref{thm:coding} is equivalent 
to what is shown for the data exchange problem in~\cite[Section V]{ELRouhayeb-2010}. When $\graph$ is a directed star graph, then 
this result is equivalent to the min-rank optimization for the index coding problem as in~\cite{Yossef-journal}.

\section{Graph-theoretic bounds}


In this section, we consider a restricted special case of the data dissemination problem, termed \emph{bipartite data dissemination problem}, where 
the underlying network graph $\graph(\cV, \cE)$ is a bipartite graph, with $\cV = \cA \cup \cB$, $\cA \cap \cB = \varnothing$, 
where the set $\cA$ is a set of transmitters, the set $\cB$ is the set of receivers, and $\cE \subseteq \cA \times \cB$. 
Additionally, assume that the information vector is $\xx=(x_1, \ldots x_n) \in \ff^n$, and $|\cB| = n$.   
For all $\ell \in \cA$ we have $\cP_\ell = [n]$ and $\cT_\ell = \varnothing$, i.e. the nodes in $\cA$ 
serve as transmitters only, and they all have all $x_i$'s. 
We assume w.l.o.g. that $\cB = [n]$, and furthermore, for the sake of simplicity, we also assume that 
for each $\ell \in \cB$, $\cT_\ell = \{ \ell \}$. 
Observe that the case, when $|\cA| = 1$ and there exists an edge $(i,\ell) \in \cE$ for $i \in \cA$ and all $\ell \in \cB$, 
corresponds to the well-known \emph{index coding} problem~\cite{BirkKol2006}. 

Next, we recall some known facts from the literature. 

\begin{definition}[\cite{Yossef-journal}]
    \label{def:side_information_graph}
    For an index coding instance as above, a \emph{side information graph} is the directed graph $\cH = (\cV_\cH, \cE_\cH)$ with the vertex set $\cV_\cH = [n]$
    and the edge set $\cE_\cH= \left\{ (i,j) \; | \; j \in \cP_i \right\}$.
\end{definition}

If the side information graph is symmetric, i.e.
\[
\forall i, j \in [n] : \, (i, j) \in \cE_\cH \quad \Leftrightarrow \quad (j, i) \in \cE_\cH \;  , 
\]
then we can view it as the corresponding undirected graph. 

\begin{definition}[\cite{Haemers}]
    \label{def:graph_fitting}
    Let $\cH$ be a graph with a vertex set $\cV_\cH = [n]$. We say that a $n \times
    n$-dimensional $0-1$ matrix $\bldM$ fits $\cH$ if for all $i \in [n]$ we have $(\bldM)_{i,i}=1$, and
    for all $i,j \in [n]$, $(\bldM)_{i,j} = 0$ if there is no edge $(i, j)$ in $\cH$.

    The minimum rank of graph $\graph$ is defined as 
    \[
        \MINRANK_2\left(\cH\right) \triangleq \min_{\bldM} \left\{\rank_2(\bldM) | \mbox{ matrix }
        \bldM \mbox{ fits } \cH \right\} \; ,
    \]
		where $\rank_2(\bldM)$ denotes the rank of the binary matrix $\bldM$ over $\ff_2$. 
\end{definition}


\begin{theorem}[\cite{Yossef-journal}]
    \label{thm:minrank_bounds}
    For any undirected graph $\cH$,
    \[
        \alpha \left(\cH \right) \leq \Theta\left(\cH\right) \leq
        \MINRANK_2\left(\cH\right) \leq \clc \left( \cH \right) \; , 
    \]
		where $\alpha\left(\cH \right)$ is the size of the largest independent set in $\cH$, 
		$\Theta\left(\cH\right)$ is the Shannon capacity of $\cH$~\cite{Haemers}, and $\clc \left( \cH \right)$ is the minimum click cover size of the graph $\cH$. 
\end{theorem}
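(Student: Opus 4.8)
The plan is to establish the four-term chain by proving the three inequalities separately; since this is the classical Haemers sandwich together with its two easy endpoints, the argument reconstructs standard reasoning rather than introducing anything new. Throughout I work with the $k$-fold strong product $\cH^{\boxtimes k}$ and the definition of Shannon capacity $\Theta(\cH) = \sup_{k} \sqrt[k]{\alpha(\cH^{\boxtimes k})}$.

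The leftmost inequality $\alpha(\cH) \le \Theta(\cH)$ I would dispatch directly from this definition: the $k=1$ term of the supremum is already $\alpha(\cH)$, so $\Theta(\cH) \ge \alpha(\cH)$.

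For the rightmost inequality $\MINRANK_2(\cH) \le \clc(\cH)$, I would exhibit an explicit low-rank fitting matrix. Take a minimum clique cover and reduce it to a clique \emph{partition} into $c = \clc(\cH)$ cliques (deleting a vertex from a clique leaves a clique, so overlaps can be removed without increasing the count). Define $\bldM$ by $(\bldM)_{i,j}=1$ when $i,j$ lie in the same part and $(\bldM)_{i,j}=0$ otherwise. Then $(\bldM)_{i,i}=1$, and whenever $(\bldM)_{i,j}=1$ with $i\ne j$ the two vertices share a clique and are therefore adjacent, so $\bldM$ fits $\cH$ in the sense of Definition~\ref{def:graph_fitting}. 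After permuting rows and columns, $\bldM$ is block-diagonal with all-one blocks, each of rank $1$, hence $\rank_2(\bldM) = c$, giving $\MINRANK_2(\cH)\le \clc(\cH)$.

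The substantial part is the middle inequality $\Theta(\cH)\le\MINRANK_2(\cH)$, which I would obtain from two lemmas. The first is $\alpha(\cH)\le\MINRANK_2(\cH)$ for every graph: for an independent set $S$ and any fitting matrix $\bldM$, the submatrix with rows and columns indexed by $S$ is the $|S|\times|S|$ identity (diagonal ones, and off-diagonal zeros since non-adjacent vertices force zero entries), so those $|S|$ rows are linearly independent and $\rank_2(\bldM)\ge|S|$. The second, and the main obstacle, is submultiplicativity under the strong product, $\MINRANK_2(\cH_1\boxtimes\cH_2)\le\MINRANK_2(\cH_1)\cdot\MINRANK_2(\cH_2)$. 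Here I would take fitting matrices $\bldM_1,\bldM_2$ of minimum rank and show that their Kronecker product $\bldM_1\otimes\bldM_2$ fits $\cH_1\boxtimes\cH_2$: the diagonal entries are products of ones, and if two distinct product-vertices $(u_1,u_2),(w_1,w_2)$ are non-adjacent in the strong product, then by definition at least one coordinate, say the first, satisfies $u_1\ne w_1$ and $u_1\not\sim w_1$, forcing $(\bldM_1)_{u_1,w_1}=0$ and hence a zero product entry. Since $\rank_2(\bldM_1\otimes\bldM_2)=\rank_2(\bldM_1)\cdot\rank_2(\bldM_2)$, the bound follows. Chaining the two lemmas gives $\alpha(\cH^{\boxtimes k})\le\MINRANK_2(\cH^{\boxtimes k})\le\MINRANK_2(\cH)^{k}$; taking $k$-th roots and the supremum over $k$ yields $\Theta(\cH)\le\MINRANK_2(\cH)$. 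The delicate point throughout is matching the exact adjacency convention of the strong product to the fitting condition, since the ``arbitrary on edges, zero on non-edges'' asymmetry is precisely what makes the Kronecker product preserve fitting.
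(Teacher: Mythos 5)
Your proof is correct, but there is no in-paper argument to compare it against: Theorem~\ref{thm:minrank_bounds} is stated as a result imported from~\cite{Yossef-journal} (the middle inequality being Haemers' bound~\cite{Haemers}), and the paper's appendix proves only its own new claims, not this one. Your reconstruction is the standard classical chain and every step checks out. The left inequality follows, as you say, from the $k=1$ term in $\Theta(\cH)=\sup_k \sqrt[k]{\alpha(\cH^{\boxtimes k})}$. On the right, turning a minimum clique cover into a clique partition is legitimate (a subset of a clique is a clique, so the count does not increase), and the resulting block matrix with all-one diagonal blocks fits $\cH$ in the sense of Definition~\ref{def:graph_fitting} and has $\ff_2$-rank equal to the number of blocks, so $\MINRANK_2(\cH)\le\clc(\cH)$ (the paper's ``click cover'' is a typo for clique cover). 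For the middle inequality, your two lemmas are exactly Haemers' argument: an independent set indexes an identity principal submatrix of any fitting matrix, giving $\alpha(\cH)\le\MINRANK_2(\cH)$ for every graph; and the Kronecker product of fitting matrices fits the strong product --- your case analysis that non-adjacency of distinct product vertices forces some coordinate with $u_i\ne w_i$ and $u_i\not\sim w_i$ is the right one --- with $\rank_2(\bldM_1\otimes\bldM_2)=\rank_2(\bldM_1)\cdot\rank_2(\bldM_2)$, giving submultiplicativity; chaining these over $\cH^{\boxtimes k}$ and taking $k$-th roots yields $\Theta(\cH)\le\MINRANK_2(\cH)$. The only cosmetic remark is that $\Theta$ is often defined as a limit rather than a supremum, but the two coincide (supermultiplicativity of $\alpha$ under $\boxtimes$ plus Fekete's lemma), and your argument only needs the supremum form.
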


\begin{theorem}[\cite{Yossef-journal}]
    \label{thm:optimal_index}
		Consider an instance of index coding problem over $\ff_2$ represented by a side information graph $\cH$.
    The minimum number of transmissions in any solution for this instance using linear code is given 
		by $\MINRANK_2\left( \cH \right)$.
\end{theorem}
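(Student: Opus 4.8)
The plan is to establish the two matching bounds $\tau \le \MINRANK_2(\cH)$ and $\tau \ge \MINRANK_2(\cH)$ separately, where $\tau$ denotes the minimum number of transmissions over all linear index codes for the instance $\cH$. First I would fix the algebraic description of a linear code: since the single transmitter broadcasts $t$ linear combinations of $x_1, \dots, x_n$ over $\ff_2$, such a code is captured by a $t \times n$ matrix $L$ over $\ff_2$, the broadcast being $L\xx$. Receiver $i$ has access to $L\xx$ together with its side-information symbols $\{ x_j : j \in \cP_i \}$ and must output $x_i$. Recall from Definition~\ref{def:side_information_graph} that $(i,j) \in \cE_\cH$ iff $j \in \cP_i$, and from Definition~\ref{def:graph_fitting} that $\bldM$ fits $\cH$ iff $(\bldM)_{i,i}=1$ and $(\bldM)_{i,j}=0$ whenever $j \notin \cP_i$.

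For the upper bound I would start from a $0$-$1$ matrix $\bldM$ that fits $\cH$ with $\rank_2(\bldM) = \MINRANK_2(\cH) =: r$, and take the code whose transmitted combinations span $\rowspace(\bldM)$, which uses exactly $r$ transmissions. Because each row $\bldM^{[i]}$ lies in this row space, receiver $i$ can compute $\bldM^{[i]} \cdot \xx = \sum_{j} (\bldM)_{i,j}\, x_j$ from the broadcast alone. The fitting conditions $(\bldM)_{i,i}=1$ and $(\bldM)_{i,j}=0$ for $j \notin \cP_i$ mean this equals $x_i + \sum_{j \in \cP_i} (\bldM)_{i,j}\, x_j$, so subtracting the known side-information terms recovers $x_i$. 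Hence $\tau \le r = \MINRANK_2(\cH)$.

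For the lower bound I would take an arbitrary valid linear code $L$ with $t$ transmissions and extract, for each $i$, its decoding relation. Decodability of $x_i = \blde_i^\TRANS \xx$ from $(L\xx,\, (x_j)_{j \in \cP_i})$ forces $\blde_i^\TRANS$ into the row space spanned by the rows of $L$ and the rows $\{ \blde_j^\TRANS : j \in \cP_i \}$; concretely there exist $\bldw_i \in \ff_2^t$ and scalars $c_{i,j} \in \ff_2$ with $\blde_i^\TRANS = \bldw_i^\TRANS L + \sum_{j \in \cP_i} c_{i,j}\, \blde_j^\TRANS$. Collecting the $\bldw_i^\TRANS$ as the rows of an $n \times t$ matrix $W$ and setting $\bldM = W L$, I would verify that $\bldM$ fits $\cH$ — its $i$-th row is $\blde_i^\TRANS + \sum_{j \in \cP_i} c_{i,j}\, \blde_j^\TRANS$ over $\ff_2$, hence carries a $1$ on the diagonal and has support inside $\{i\} \cup \cP_i$ — and that $\rank_2(\bldM) \le \rank_2(L) \le t$. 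This yields $\MINRANK_2(\cH) \le t$, and combining the two bounds gives $\tau = \MINRANK_2(\cH)$.

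The main obstacle I expect is the lower bound, specifically justifying that the decodability of a linear code actually produces the relation $\blde_i^\TRANS = \bldw_i^\TRANS L + \sum_{j \in \cP_i} c_{i,j}\, \blde_j^\TRANS$. I would handle this with a kernel argument: the information available to receiver $i$ is the linear image $\Phi_i \xx$, where $\Phi_i$ stacks $L$ and the rows $\blde_j^\TRANS$ for $j \in \cP_i$; requiring $x_i$ to be a (a priori possibly nonlinear) function of $\Phi_i \xx$ for every $\xx \in \ff_2^n$ forces $\ker \Phi_i \subseteq \ker \blde_i^\TRANS$, i.e. $\blde_i^\TRANS \in \rowspace(\Phi_i)$, which is exactly the desired relation. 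Everything else is routine linear algebra over $\ff_2$, and I would remark in passing that this relation is precisely condition~\eqref{ex_condition} of Theorem~\ref{thm:coding} specialized to the single-transmitter index-coding setting.
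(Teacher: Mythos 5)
Your proposal is correct; note that the paper gives no proof of Theorem~\ref{thm:optimal_index} at all, citing it from~\cite{Yossef-journal}. Your argument is the standard one and coincides, when specialized to the single-transmitter star graph, with the paper's own proof of its generalization Theorem~\ref{thm:coding}: achievability by broadcasting a basis of the row space of a minimum-rank fitting matrix and cancelling side information, and the converse by the kernel/orthogonality argument (the role played in the paper by Lemmas~\ref{lemma:exists_orthogonal} and~\ref{lemma:orthogonal_subspace}) showing that correct decoding, even by an a priori nonlinear decoder, forces the unit vector $\blde_i$ into the span of the transmitted rows and the side-information unit vectors.
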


Take an instance of a bipartite data dissemination problem. We define a side information graph $\cH$ 
for that problem analogous to Definition~\ref{def:side_information_graph}, namely
a side information graph $\cH = (\cV_\cH, \cE_\cH)$ is a graph with the vertex set $\cV_\cH = [n]$
    and the edge set $\cE_\cH = \left\{ (i,j) \; | \; j \in \cP_i \right\}$.

Below, we derive new graph-theoretic 
upper and lower bounds on the optimal number of transmissions for the bipartite data dissemination problem. We start with the lower bound, given by the following lemma. 

\begin{proposition}
The optimal number of transmissions in Protocol $\sP$ for a bipartite data dissemination problem is at least $\MINRANK_2\left( \cH \right)$, 
which is in turn bounded from below by $\Theta\left(\cH\right)$ and by $\alpha \left( \cH \right)$.
\end{proposition}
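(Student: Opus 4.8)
The plan is to establish the bound in two stages. The chain $\MINRANK_2(\cH) \ge \Theta(\cH) \ge \alpha(\cH)$ is immediate from Theorem~\ref{thm:minrank_bounds} applied to the side information graph $\cH$, so the substantive work lies entirely in showing that the optimum for the bipartite data dissemination problem is at least $\MINRANK_2(\cH)$. The guiding idea is that a bipartite instance can only be \emph{harder} than the index coding instance on the same graph $\cH$: in index coding a single broadcaster reaches every receiver, whereas here each receiver $\ell \in \cB$ hears only the transmitters in $\cN_{in}(\ell) \subseteq \cA$. I would therefore reduce to Theorem~\ref{thm:optimal_index}, which identifies the optimum for index coding over $\ff_2$ with $\MINRANK_2(\cH)$.

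Concretely, I would start from an arbitrary valid execution of Protocol $\sP$ for the bipartite instance using some total of $T$ transmissions, summed over all transmitters and all rounds. Since every edge runs from $\cA$ to $\cB$, the receivers in $\cB$ have no outgoing edges, so any symbol they might broadcast is heard by no one and may be discarded without increasing the count; and every transmitter $\ell \in \cA$ already holds all of $\xx$ because $\cP_\ell = [n]$, so it never learns anything new and its transmissions are always linear combinations of $\xx$. In particular, multiple rounds give no advantage and we may collapse the entire execution into a single set of $T$ combinations of $\xx$ produced by the nodes of $\cA$. I would then hand these $T$ combinations to one virtual broadcaster connected to every receiver, obtaining a one-round index coding scheme on the same graph $\cH$ with exactly $T$ transmissions; each receiver now observes a superset of the combinations it saw before, together with the identical side information indexed by $\cP_\ell$, so every demand $\cT_\ell = \{\ell\}$ is still satisfied.

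This reduction shows that any valid bipartite scheme with $T$ transmissions yields a valid index coding scheme with $T$ transmissions, whence the optimum for index coding is at most the optimum for the bipartite problem. By Theorem~\ref{thm:optimal_index} the former equals $\MINRANK_2(\cH)$, giving the desired inequality, and combining with Theorem~\ref{thm:minrank_bounds} completes the chain. The main obstacle I anticipate is making the monotonicity step fully rigorous: one must verify that enlarging the set of linear combinations available to a receiver, while preserving its side information, can never destroy decodability, and that the round structure of Protocol $\sP$ genuinely collapses in this bipartite setting so that the transmission count is preserved exactly rather than merely bounded. Once those two points are pinned down, the result follows directly from the cited theorems.
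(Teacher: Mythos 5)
Your proposal is correct and follows essentially the same route as the paper: both reduce the bipartite instance to an index coding instance by merging all of $\cA$ into a single super-transmitter reaching every receiver, observe that any valid dissemination scheme remains valid there (the side information graph $\cH$ being unchanged), and then invoke Theorem~\ref{thm:optimal_index} together with Theorem~\ref{thm:minrank_bounds}. Your write-up merely makes explicit the monotonicity and round-collapsing details that the paper's terser proof leaves implicit.
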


\begin{proof}
We convert the given instance of a bipartite data dissemination problem into an instance of index coding, by replacing all the transmitters in $\cA$ by one super-transmitter, and by adding edges from this super-transmitter to all the nodes in $\cB$. The optimal transmission scheme for the original data dissemination problem is also a valid transmission scheme for the corresponding index coding problem. The side information graph for both instances is exactly the same. The optimal number of transmissions in the later scheme 
is bounded from below by $\MINRANK_2\left( \cH \right)$ due to Theorem~\ref{thm:optimal_index}, and the claim follows. 
\end{proof}
\medskip 

Next, consider an instance of a bipartite data dissemination problem with 
the underlying network graph $\graph(\cV, \cE)$, where $\cV = \cA \cup \cB$, $\cA \cap \cB = \varnothing$, 
and $\cE \subseteq \cA \times \cB$.
Assume that $|\cA| = t$ and let $\mathfrak{P} = \left\{ \graph_1, \graph_2, \cdots , \graph_t \right\}$, $\graph_i = (\cV_i, \cE_i)$, 
be a set of graphs induced by the partition $\cV = \cup_{i \in [t]} \cV_i$, all $\cV_i$ are disjoint, such
that for all $i \in [t]$ there is $|\cA \cap \cV_i| = 1$. Since there are no parallel edges, we obtain that $|\left\{ (v,b) |
(v,b) \in \cE_i \right\}|=1$ for all $i \in [t]$ and all $b \in \cB \cap \cV_i$.

\begin{lemma}
For all $\graph_i$ as above, $i \in [t]$, consider an induced instance of index coding problem with a transmitter in $\cA \cap \cV_i$
and the set of the receivers $\cB \cap \cV_i$. 
For each $\ell \in \cV_i$, the sets $\cP_\ell$ and $\cT_\ell$ are defined exactly as in the original problem.
Denote by $\cH_i$, $i \in [t]$, the corresponding side information graph. 
Then, the optimum number of transmissions in Protocol $\sP$ for the given bipartite data dissemination problem is less or equal to
\[
\sum_{i=1}^t \MINRANK_2 \left( \cH_i \right) \; .
\]
\end{lemma}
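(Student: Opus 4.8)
The plan is to prove this upper bound constructively, by exhibiting an explicit one-round instance of Protocol~$\sP$ whose total transmission count equals $\sum_{i=1}^t \MINRANK_2(\cH_i)$. The key observation is that the partition $\mathfrak{P}$ decouples the bipartite data dissemination problem into $t$ mutually independent index coding instances. For each $i \in [t]$, let $a_i$ denote the unique transmitter in $\cA \cap \cV_i$; it possesses all of $\xx$, since $\cP_{a_i} = [n]$. Block $i$ consists of $a_i$ together with the receivers $\cB \cap \cV_i$, and by the structure noted before the statement, each such receiver is joined to $a_i$ by exactly one edge of $\cE_i$. Because every receiver $\ell$ requests only $\cT_\ell = \{\ell\}$ and draws side information from $\cP_\ell$, the instance induced on $\graph_i$ is precisely an index coding instance with side information graph $\cH_i$.

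First I would invoke the achievability direction of Theorem~\ref{thm:optimal_index}: for each $i \in [t]$ there is a linear index coding solution for $\cH_i$ over $\ff_2$ using exactly $\MINRANK_2(\cH_i)$ transmissions. Concretely, $a_i$ broadcasts $\MINRANK_2(\cH_i)$ linear combinations of the symbols $\{x_j : j \in \cB \cap \cV_i\}$, which it can form since it holds all of $\xx$. As $(a_i, b) \in \cE_i$ for every $b \in \cB \cap \cV_i$, each such receiver hears all $\MINRANK_2(\cH_i)$ broadcasts of its assigned transmitter within a single round, and by the defining property of the index coding solution recovers $x_b$ from these broadcasts together with its side information. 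Carrying this out simultaneously across all $t$ blocks yields a valid one-round scheme using $\sum_{i=1}^t \MINRANK_2(\cH_i)$ transmissions in total, which establishes the claimed bound.

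The step I expect to require the most care is the decoding verification, because the induced side information graph $\cH_i$ encodes only the side information that a receiver $b$ holds about symbols \emph{inside} its own block $\cB \cap \cV_i$, whereas in the global problem $b$ may additionally possess symbols $x_j$ with $j \notin \cB \cap \cV_i$. I would resolve this by monotonicity of decodability: the Theorem~\ref{thm:optimal_index} solution for $\cH_i$ already guarantees recovery of $x_b$ from the broadcasts of $a_i$ and the restricted side information indexed by $\cP_b \cap (\cB \cap \cV_i)$, so a fortiori $b$ decodes using its full side information $\cP_b$. A secondary point is that a receiver $b \in \cV_i$ may also overhear broadcasts from other transmitters $a_{i'}$ with $i' \neq i$; but since the argument only relies on $b$ decoding from the transmissions of its own block, any additional received symbols can simply be discarded, and the union of the $t$ per-block schemes remains a legitimate one-round instance of Protocol~$\sP$ satisfying every request.
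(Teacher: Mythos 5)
Your proof is correct and takes essentially the same route as the paper's: both decouple the problem along the partition $\mathfrak{P}$ into $t$ independent index coding instances and invoke Theorem~\ref{thm:optimal_index} per block to obtain $\MINRANK_2(\cH_i)$ transmissions each, summing over $i \in [t]$. The only differences are presentational: the paper passes through an auxiliary data dissemination problem on the subgraph with edge set $\cup_{i \in [t]} \cE_i$ and argues its optimum decomposes, whereas you construct the combined one-round scheme directly on $\graph$ using only the achievability direction, and you are more explicit about the (harmless) subtleties of out-of-block side information and overheard broadcasts from other blocks.
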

\begin{proof}
Let $\graph(\cV, \cE)$ be an underlying network graph of a bipartite data dissemination problem, and let $\mathfrak{P}$
be as defined above. We construct a new bipartite data dissemination problem, as follows. The network graph $\tilde{\graph}(\cV, \tilde{\cE})$
for the new problem is given by 
\[
\tilde{\cE} = \cup_{i \in [t]} \cE_i \subseteq \cE\; . 
\]
For each $\ell \in \cV$, the sets $\cP_\ell$ and $\cT_\ell$ are defined exactly as in the original problem. 

The optimum number of transmissions for the original problem is less or equal to the number of transmissions in any solution for the new problem, 
because the optimum solution for the new problem is a valid solution to the original problem. Since there are no edges connecting nodes in different graphs $\graph_i$, the optimum solution to the new problem is obtained as a combination of optimum solutions to each of the 
index coding problems induced by the graphs $\graph_i$, $i \in [t]$. Therefore, the optimum number of solutions to the new problem is given by 
\[
\sum_{i=1}^t \MINRANK_2 \left( \cH_i \right) \; ,
\]
and it serves as an upper bound on the optimum number of transmissions for the original problem. 
\end{proof}
\medskip 

Denote by $\pp$ the set of all graph partitions $\mathfrak{P} = \left\{ \graph_1, \graph_2, \cdots , \graph_t \right\}$, $\graph_i = (\cV_i, \cE_i)$, 
such that $\cV = \cup_{i \in [t]} \cV_i$, all $\cV_i$ are disjoint, $|\cA \cap \cV_i| = 1$ for all $i \in [t]$, and $\cE_i$ is a set of edges induced by $\cV_i$ in $\graph$.
For a partition $\mathfrak{P}$, let $\cH_i(\mathfrak{P})$ be a side information graph of the index coding problem induced by the vertex set
$\cV_i$ in $\graph$.  
By minimizing over all such $\mathfrak{P} \in \pp$, and by applying Theorem~\ref{thm:minrank_bounds}, we obtain the following result. 

\begin{corollary}
The optimum number of transmissions for the given bipartite data dissemination problem is less or equal to
\begin{multline*}
\min_{\mathfrak{P} \in \pp} \sum_{i=1}^t \MINRANK_2 \left( \cH_i(\mathfrak{P}) \right) \le 
\min_{\mathfrak{P} \in \pp} \sum_{i=1}^t \clc \left( \cH_i(\mathfrak{P}) \right) \\ 
= \min_{\mathfrak{P} \in \pp} \clc \left( \cH(\mathfrak{P}) \right)
\; ,
\end{multline*}
where $\cH(\mathfrak{P})$ is the graph obtained by the union of the graphs $\cH_i(\mathfrak{P})$, $i \in [t]$. 
\end{corollary}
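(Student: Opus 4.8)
The plan is to verify the three relations in the displayed chain one at a time, drawing the first two from results already established in this section and reserving the real work for the concluding equality. For the leftmost inequality I would simply invoke the preceding Lemma: it states that for each individual partition $\mathfrak{P} \in \pp$ the optimum number of transmissions is at most $\sum_{i=1}^t \MINRANK_2(\cH_i(\mathfrak{P}))$. Since the optimum is a single fixed quantity while this bound holds for every $\mathfrak{P} \in \pp$, the optimum is at most the smallest such sum; and because the underlying graph is finite, $\pp$ is a finite set and the minimum is attained. This yields the first inequality.

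For the middle inequality I would apply the rightmost bound of Theorem~\ref{thm:minrank_bounds}, namely $\MINRANK_2(\cH) \le \clc(\cH)$, to each component side information graph $\cH_i(\mathfrak{P})$ and sum over $i \in [t]$, giving $\sum_{i=1}^t \MINRANK_2(\cH_i(\mathfrak{P})) \le \sum_{i=1}^t \clc(\cH_i(\mathfrak{P}))$ for every fixed $\mathfrak{P}$. Monotonicity of the minimum then transfers the inequality to the minimized quantities: if $f(\mathfrak{P}) \le g(\mathfrak{P})$ for all $\mathfrak{P} \in \pp$, then evaluating $f$ at a minimizer of $g$ shows $\min_{\mathfrak{P}} f \le \min_{\mathfrak{P}} g$.

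The concluding equality carries the main content, and I expect it to be the principal obstacle. The key structural fact I would establish is that, for a fixed $\mathfrak{P}$, the component graphs $\cH_i(\mathfrak{P})$ are pairwise vertex-disjoint: each $\cH_i(\mathfrak{P})$ is supported on the receiver set $\cB \cap \cV_i$ (equivalently, on the single symbol requested by each such receiver), and the $\cV_i$ are disjoint by the definition of $\pp$. Thus $\cH(\mathfrak{P})$ is a genuine disjoint union of the $\cH_i(\mathfrak{P})$, carrying no edges between components. I would then argue that the clique cover number is additive over such disjoint unions: no clique of size at least $2$ can straddle two components, so every minimum clique cover of $\cH(\mathfrak{P})$ splits into clique covers of the individual $\cH_i(\mathfrak{P})$ and conversely, forcing $\clc(\cH(\mathfrak{P})) = \sum_{i=1}^t \clc(\cH_i(\mathfrak{P}))$. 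Minimizing both sides over $\mathfrak{P} \in \pp$ gives the equality. In carrying this out I would, as elsewhere in this section, treat the side information graphs as undirected so that $\clc$ and Theorem~\ref{thm:minrank_bounds} apply, and I would take care to confirm the vertex-disjointness claim precisely, since it is exactly what makes the clique cover number additive.
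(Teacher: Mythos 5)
Your proposal is correct and follows essentially the same route as the paper, which obtains the corollary by minimizing the preceding lemma's bound over all partitions $\mathfrak{P} \in \pp$ and applying the $\MINRANK_2(\cH) \le \clc(\cH)$ bound of Theorem~\ref{thm:minrank_bounds} termwise. Your explicit verification of the final equality via vertex-disjointness of the $\cH_i(\mathfrak{P})$ and additivity of the clique cover number over disjoint unions is precisely the (unstated) justification the paper relies on.
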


\section{Data exchange protocol extension to many rounds}
\label{section:dep_improvements}

In this section, we consider a more general scenario. Here, the underlying network graph $\graph(\cV, \cE)$ is an arbitrary directed graph. 
For each node $\ell \in \cV$, we require that $\cP_i \cup \cT_i = [n]$. 

In the Section~\ref{sec:one_round}, only a single round of communications was
required. If the underlying network graph is an arbitrary directed graph, then
several rounds of communications may be needed. The goal is to minimize the
total number of communicated bits while the number of rounds is bounded.

\begin{proposition}
For a node $\ell \in \cV$, and for $i \in [n]$, denote by $\distance_\ell(x_i)$ the length of the shortest path 
from a set of vertices having $x_i$ in their possession to $\ell$. Let  
$\distance_\ell = \sum_{i \in \cT_\ell} \distance_\ell(x_i)$ and 
\begin{equation}
    \distance_{\max} = \max_{\ell \in \cV} \distance_\ell \; . 
\label{eq:maximize}
\end{equation}
Then, the minimum number of transmissions in any protocol for data dissemination problem is at least $\distance_{\max}$. 
\label{prop:lower-bound}
\end{proposition}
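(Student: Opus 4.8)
The plan is to fix a node $\ell$ attaining the maximum in \eqref{eq:maximize}, so that $\distance_{\max} = \distance_\ell$, and to charge pairwise-disjoint sets of transmissions to the ``distance shells'' around $\ell$. For a node $v \in \cV$ let $\delta(v)$ denote the length of the shortest directed path from $v$ to $\ell$, and for $j \ge 0$ set $N_j = \{ v \in \cV : \delta(v) \le j \}$, so that $N_0 = \{\ell\}$ and $N_0 \subseteq N_1 \subseteq \cdots$; note $\distance_\ell(x_i) = \min\{ \delta(v) : v \text{ initially possesses } x_i \}$. First I would rewrite the target quantity by layers: setting $c_d = |\{ i \in \cT_\ell : \distance_\ell(x_i) \ge d \}|$, a routine exchange of summation gives $\distance_\ell = \sum_{d \ge 1} c_d$. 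It then suffices to produce, for each $d \ge 1$, at least $c_d$ transmissions, in such a way that transmissions charged to different values of $d$ are disjoint.

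The structural observation I would establish next is that a message can enter the ball $N_{d-1}$ from outside only if it is transmitted by a node at distance \emph{exactly} $d$ from $\ell$. Indeed, if a node $v$ with $\delta(v) = e \ge d$ broadcasts a message received by some $w \in N_{d-1}$, then the edge $(v,w) \in \cE$ forces $\delta(v) \le \delta(w) + 1 \le d$, whence $e = d$; conversely, every node at distance exactly $d$ has an out-neighbour at distance $d-1$ lying in $N_{d-1}$, so each of its transmissions is such a ``cross-boundary'' event. Since the transmitter of a message has a single well-defined distance to $\ell$, this assigns every cross-boundary transmission (over all $d$) to exactly one shell, which furnishes the required disjointness. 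Writing $T_d$ for the number of transmissions made by nodes at distance exactly $d$, we therefore have that the total number of transmissions is at least $\sum_{d \ge 1} T_d$, and it remains to prove $T_d \ge c_d$ for each $d$.

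For the bound $T_d \ge c_d$ I would track the collective \emph{knowledge space} of the node set $N_{d-1}$: the $\ff$-span of the coefficient vectors of all symbols held by some node of $N_{d-1}$, projected onto the coordinates indexed by $\cT_\ell^{(d)} = \{ i \in \cT_\ell : \distance_\ell(x_i) \ge d \}$. By definition of $\distance_\ell(x_i)$, no node of $N_{d-1}$ initially possesses any $x_i$ with $i \in \cT_\ell^{(d)}$, so this projected space starts at dimension $0$; at the end of Protocol $\sP$ it contains the projections of $\blde_i$ for all $i \in \cT_\ell^{(d)}$, since $\ell \in N_{d-1}$ decodes each such $x_i$, and hence has dimension $c_d$. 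The crucial monotonicity step is that this projected dimension can increase only when a node of $N_{d-1}$ receives a message sent from outside $N_{d-1}$: any message exchanged inside $N_{d-1}$ is a linear combination of vectors already in the span, and internal linear computation cannot enlarge it. As each cross-boundary transmission event contributes at most one new dimension, the number of such events is at least $c_d$, and by the characterization above this number equals $T_d$. Summing yields $\sum_{d \ge 1} T_d \ge \sum_{d \ge 1} c_d = \distance_\ell = \distance_{\max}$, as claimed.

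The step I expect to require the most care is this monotonicity argument: one must argue, across the round structure of Protocol $\sP$ and including the constraint that symbols received in a round may not be retransmitted within the same round, that the collective knowledge space of $N_{d-1}$ grows only through cross-boundary receptions and by at most one dimension per received message. In particular, a single broadcast received simultaneously by several members of $N_{d-1}$ must still be counted once, which is precisely why charging to the transmitter (rather than to the receivers) is the right bookkeeping and makes both the disjointness and the per-event dimension increment line up.
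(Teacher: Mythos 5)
Your proof is correct, and it is considerably more substantial than the paper's own proof, which consists of a single assertion: fix the node $\ell$ attaining the maximum in~(\ref{eq:maximize}) and claim that at least $\distance_\ell$ transmissions are needed to satisfy the requests of $\ell$, with no further justification. That claim is not as innocent as the paper's word ``straightforward'' suggests: a naive per-symbol accounting (each requested $x_i$ must make $\distance_\ell(x_i)$ hops, each hop a distinct transmission) does not immediately work, because a single coded broadcast can advance several requested symbols simultaneously (e.g.\ $x_1+x_2$) and is heard by several receivers at once. Your argument resolves exactly these two difficulties. Charging each transmission to the exact distance $\delta(v)=d$ of its transmitter makes the charges across shells disjoint and counts each broadcast once rather than once per receiver; and the dimension of the projection of the collective knowledge space of $N_{d-1}$ onto the coordinates $\cT_\ell^{(d)}$ is a valid potential function: it starts at $0$, must end at $c_d$, grows only on cross-boundary receptions, and grows by at most one per such reception, which is precisely the formal statement that a coded symbol, however cleverly mixed, buys at most one new dimension. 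In effect you prove the sharper per-shell inequality $T_d \ge c_d$, from which the proposition follows by summation; the paper's one-liner gives no mechanism at all. The only caveat worth recording is that your knowledge-space argument relies on linearity of the coding (both for the span argument and for the step that $\ell$ decoding $x_i$ forces $\blde_i$ into the span of the coefficient vectors $\ell$ holds); since linear coding is one of the paper's standing model assumptions, this proves the proposition for the class of protocols the paper actually considers.
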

The proof of this proposition appears in the appendix. 
\medskip 

Let the matrix families $\AAA_i$, for $i \in \cV$, be as defined in~(\ref{eq:A-i}), and 
$\AAA$ be as defined in~(\ref{eq:possession-matrix}). 

\begin{definition}
    The maximum rank of the matrix family $\AAA$ is defined as
    \begin{align*}
        \MAXRANK (\AAA) = \max\limits_{\bldAA \in \AAA} \rank (\bldA).
        \label{eq:family_rank}
    \end{align*}
    \label{definition:maxrank}
\end{definition}

Given the matrix family $\AAA_i$, we define an operator $\Gamma(\cdot)$, which replaces the symbols `$\star$' in the maximal number of the first rows with
linearly independent canonical vectors, and replaces the symbols `$\star$' in the remaining rows with zeros.

Similarly, operator $\Gamma_\ell(\cdot)$, $\ell \in \cV$, takes as an input the possession matrix from $\AAA$ and returns
$\Gamma_\ell(\AAA) = \Gamma(\AAA_\ell)$.

\begin{example}
    For a fixed $\ell \in \cV$, let
    \[
        \AAA_i = \left[\begin{matrix}
            \star & 0 & \star & \star & 0 \\
            \star & 0 & \star & \star & 0 \\
						\star & 0 & \star & \star & 0 \\
						\star & 0 & \star & \star & 0 \\ 
						\star & 0 & \star & \star & 0 \\        
						\end{matrix}\right] \ .
    \]

    After replacing the symbols `$\star$', we obtain
    \[
        \Gamma(\AAA_i) = \left[\begin{matrix}
                1 & 0 & 0 & 0 & 0\\
                0 & 0 & 1 & 0 & 0\\
                0 & 0 & 0 & 1 & 0\\
								0 & 0 & 0 & 0 & 0\\
								0 & 0 & 0 & 0 & 0\\
        \end{matrix}\right] \; .
    \]
\end{example}

\subsubsection*{Algebra of matrix families}

Denote by $\ffs$ the alphabet $\FFF \cup \{\mbox{`$\star$'}\}$. In what follows, we represent 
families of matrices over $\FFF$ as matrices over $\ffs$. In the sequel, we 
define operations on the matrices over $\ffs$, in a way which allows 
to describe algebraically the data dissemination in the network. In particular, we define two operations, 
the addition `$+$' and the multiplication `$\cdot$' of two elements $a, b \in \ffs$, in such 
way that if $a, b \in \FFF$, then these operations coincide with usual addition and 
multiplication in the field $\FFF$. 

Addition and multiplication of two elements, where at least one of the elements is `$\star$', are given in the following tables. 

Addition table: 
    \begin{align}
        \begin{array}{|c||c|c|}
            \hline \boldsymbol{+} & b & \star \\
            \hline \hline a & a + b & \star \\
            \hline \star & \star & \star \\
            \hline
        \end{array} \;\; .
        \label{table:star_addition}
    \end{align}
		
Multiplication table: 
\begin{align}
        \begin{array}{|c||c|c|c|}
            \hline \boldsymbol{\cdot} & 0 & b \neq 0 & \star \\
            \hline \hline 0 & 0 & 0 & 0  \\
            \hline a \neq 0 & 0 & a \cdot b & \star \\
						\hline \star & 0 & \star & \star \\
            \hline
        \end{array} \;\; ,
        \label{table:star_multiplication}
    \end{align}
 where $a$ and $b$ are any two elements in $\ff$.
    
The addition and multiplication operations over $\ffs$ can be naturally extended to operations on matrices over $\ffs$.  

\begin{example}
    Let a $3 \times 3$ matrix $\bldB$ over the field $\FFF$ and a $3 \times 3$ matrix family $\AAA$ over $\FFF$ be given by
    \[
        \bldB = \left[ \begin{matrix}
                1 & 1 & 0 \\
                1 & 1 & 1 \\
                0 & 1 & 1
        \end{matrix} \right] \qquad \mbox{and} \qquad
        \AAA = \left[ \begin{matrix}
                \star & 0 & 0 \\
                0 & 0 & \star \\
                0 & \star & 0
        \end{matrix} \right] \;\; .
    \]
		By multiplying the matrix family $\AAA$ from the left by the matrix $\bldB$, we obtain: 
    \begin{multline*}
        \bldB \cdot \AAA = \left[ \begin{matrix}
                1 & 1 & 0 \\
                1 & 1 & 1 \\
                0 & 1 & 1
        \end{matrix} \right] \cdot 
        \left[ \begin{matrix}
                \star & 0 & 0 \\
                0 & 0 & \star \\
                0 & \star & 0
        \end{matrix} \right] 
				= \left[ \begin{matrix}
                \star & 0 & \star \\
                \star & \star & \star \\
                0 & \star & \star \\
        \end{matrix} \right] \;\; . 
    \end{multline*}
    \label{example:matrix_multiplication}
\end{example}

We also define the multiplication of integer matrices by the matrix families over $\FFF$.

\begin{remark}
    In what follows, we use a binary operation of matrix multiplication, where one of the arguments is an integer matrix
			and the second argument is a family of matrices over $\ff$, and the result is a family of matrices over $\ff$. 
			In order to be able to do so, by slightly abusing the notation, we use the product of
    an integer matrix with a matrix over $\FFF^*$, according to the rules 
		defined in~(\ref{table:star_addition}) and~(\ref{table:star_multiplication}). 
		The result of this operation is a matrix over $\FFF^*$, which can be interpreted as a family of matrices over $\ff$.
    \label{note:integer_multiplication}
\end{remark}

\begin{example}
    Let a $3 \times 3$ integer matrix $\bldB$ be
    \[
        \bldB = \left[ \begin{matrix}
                1 & 2 & 0 \\
                4 & 5 & 6 \\
                0 & 7 & 8
        \end{matrix} \right] \;\; ,
    \]
    and $\AAA$ be a $3 \times 3$ matrix family over $\FFF$ as in Example~\ref{example:matrix_multiplication}.

    Multiplying $\bldB$ by $\AAA$ yields
    \begin{align*}
        \bldB \cdot \AAA = 
        \left[ \begin{matrix}
                1 & 2 & 0 \\
                4 & 5 & 6 \\
                0 & 7 & 8
        \end{matrix} \right] \cdot 
       \left[ \begin{matrix}
                \star & 0 & 0 \\
                0 & 0 & \star \\
                0 & \star & 0
        \end{matrix} \right] 
        = \left[ \begin{matrix}
                \star & 0 & \star \\
                \star & \star & \star \\
                0 & \star & \star \\
        \end{matrix} \right] \; . 
    \end{align*}
\end{example}

\subsection{The role of adjacency matrix}

\begin{lemma}
    \label{lemma:family_change}
    Let $\AAA$ be the possession matrix as defined in
    Equation~(\ref{eq:possession-matrix}). Let $\bldD$ be the adjacency
    matrix of the graph $\graph$. Let $\bldE$ be an $n \times n$ identity matrix. 
		It is possible to choose integers $\tau_{i,\ell}$ and coefficients $\mu_{x, i, \ell, j}$ and $\mu_{x, \ell, j}$ in Protocol $\sP$, such that after performing  one round of the protocol, the new
    possession matrix $\AAA_+$ is related to $\AAA$ as
    \[
        \AAA_+ = (\bldD \otimes \bldE) \cdot \AAA.
    \]
\end{lemma}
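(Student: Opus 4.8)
The plan is to prove the identity \emph{constructively}: exhibit one particular choice of the round parameters $\tau_{1,\ell}$ and coefficients $\mu_{x,1,\ell,j}$ in Protocol $\sP$ for which the post-round possession family $\AAA_+$ coincides, entry by entry over $\ffs$, with $(\bldD\otimes\bldE)\cdot\AAA$. First I would unpack the right-hand side. Since $\bldE=\bldI$ is the $n\times n$ identity, the Kronecker product $\bldD\otimes\bldE$ is a $(kn)\times(kn)$ matrix whose $(\ell,j)$ block of size $n\times n$ equals $(\bldD)_{\ell,j}\,\bldI$. Multiplying the block-stacked family $\AAA=[\AAA_1;\dots;\AAA_k]$ on the left therefore gives, in its $\ell$-th block,
\[
\bigl[(\bldD\otimes\bldE)\cdot\AAA\bigr]_\ell \;=\; \sum_{j=1}^{k}(\bldD)_{\ell,j}\,\AAA_j \;=\; \sum_{j\in\cN_{in}(\ell)}\AAA_j,
\]
using $(\bldD)_{\ell,j}=1 \Leftrightarrow (j,\ell)\in\cE \Leftrightarrow j\in\cN_{in}(\ell)$. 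By the addition table~(\ref{table:star_addition}), the $(a,b)$ entry of this sum is $\star$ exactly when $b\in\cP_j$ for some in-neighbour $j$, and $0$ otherwise; i.e.\ the $\ell$-th block is the possession family of a node whose possessed index set is $\bigcup_{j\in\cN_{in}(\ell)}\cP_j$.

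Next I would describe the transmission scheme realizing this. Each node $j$ broadcasts its entire possessed content in full rank: set $\tau_{1,j}=|\cP_j|$ and let the $|\cP_j|$ transmitted combinations be the individual symbols $\{x_b : b\in\cP_j\}$ (coefficients $\mu$ forming the identity pattern on $\cP_j$). With this choice the family of combinations that node $j$ puts on the channel is exactly $\AAA_j$. In the receiving phase every node $\ell$ appends to $\cQ_\ell^{(0)}$ all combinations received from its in-neighbours, so that in the algebra of families the new $\ell$-th block becomes $\AAA_\ell + \sum_{j\in\cN_{in}(\ell)}\AAA_j$. Invoking the convention that a node retains (equivalently, ``receives'') its own broadcast, i.e.\ $\ell\in\cN_{in}(\ell)$, consistent with $\bldD$ being reflexive as used in the Corollary on $\bldD^r$, the term $\AAA_\ell$ is absorbed and leaves $\AAA_{+,\ell}=\sum_{j\in\cN_{in}(\ell)}\AAA_j$. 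Matching this block-by-block with the computation above and re-stacking over $\ell\in\cV$ yields $\AAA_+=(\bldD\otimes\bldE)\cdot\AAA$.

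The step I expect to be the crux is not the bookkeeping but the faithfulness of the $\ffs$-arithmetic to the actual linear algebra: I must argue that the $\star$-pattern on $\bigcup_{j\in\cN_{in}(\ell)}\cP_j$ is both achievable and not exceeded. Achievability follows because transmitting the $|\cP_j|$ symbols of each in-neighbour lets $\ell$ reconstruct every $x_b$ with $b\in\bigcup_{j}\cP_j$, so any field value is attainable in each $\star$-position by a suitable realization; conversely no node can manufacture dependence on a symbol that no in-neighbour holds, so positions outside that union stay $0$. The second delicate point is the diagonal/self-retention convention noted above: the identity as stated is exact precisely when each node's own index set is already counted, which I would reconcile explicitly with the paper's reflexive reading of $\bldD$ (a node keeps its data across rounds). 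Finally I would emphasise that it suffices to exhibit one admissible choice of $\tau_{1,\ell}$ and $\mu_{x,1,\ell,j}$ achieving the claimed family, since the lemma only asserts the \emph{existence} of such a choice.
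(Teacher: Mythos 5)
Correct, and essentially the paper's own argument: both proofs reduce to computing the $\star$-pattern of $(\bldD\otimes\bldE)\cdot\AAA$ --- you via the blockwise Kronecker expansion $\sum_{j\in\cN_{in}(\ell)}\AAA_j$, the paper via the factorization $\AAA=\hat{\AAA}\otimes\bldone_n$ and the mixed-product rule (it reads $\bldE$ as the all-one matrix of the notation section rather than the identity of the lemma statement, but both readings give $(\bldD\cdot\hat{\AAA})\otimes\bldone_n$) --- and then observe that this pattern is exactly what one round of ``every node broadcasts all of its symbols uncoded'' produces. Your write-up is in fact more careful on a point the paper glosses over: the stated identity holds only under the reflexive convention that $\bldD$ has ones on its main diagonal, so that the retained block $\AAA_\ell$ is absorbed into $\sum_{j\in\cN_{in}(\ell)}\AAA_j$; the paper never says this in the proof of this lemma and only mentions the diagonal convention later, inside the proof of Lemma~\ref{lemma:existance}.
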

\begin{proof}
    From the definition of $\AAA$ in Equation~(\ref{eq:possession-matrix}), the matrix families
    $\AAA_i$, $i \in \cV$, have $n$ identical rows. 
		Then, we can write $\AAA_i = \AAA_i^{[1]} \otimes \bldone_n$, where $\AAA_i^{[1]}$ is a row vector over $\ffs$ of length $n$ which consists of a single row of $\AAA_i$. From the definition of the tensor product, we have $\AAA = \hat{\AAA} \otimes \bldone_n$, 
    where
    \begin{align*}
        \hat{\AAA} = \left[ \begin{array}{c}
            \AAA_1^{[1]} \\
						\hline
            \AAA_2^{[1]} \\
            \hline
						\vdots \\
            \hline
						\AAA_k^{[1]} \\
            \end{array} \right] \; .
    \end{align*}

    The right hand side of the claim can be re-written as
    \begin{align*}
        (\bldD \otimes \bldE) \cdot \AAA & = (\bldD \otimes \bldE) \cdot (\hat{\AAA} \otimes \bldone_n) \\
        & \stackrel{(*)}{=} (\bldD \cdot \hat{\AAA}) \otimes (\bldE \cdot \bldone_n) \\
        & = (\bldD \cdot \hat{\AAA}) \otimes (n \bldone_n) \\
        & \stackrel{(**)}{=} (\bldD \cdot \hat{\AAA}) \otimes \bldone_n \; .
    \end{align*}

    Here, the transition $(*)$ is due to the properties of the tensor product, and
    the transition $(**)$ is due to Remark~\ref{note:integer_multiplication}.
		
    Next, assume that 
		\begin{equation*}
		\hat{\AAA} = \Big(\hat{a}_{\ell,j} \Big)_{\substack{\ell \in \cV\\j \in [n]}} ,  \;
		\bldD=\Big(d_{\ell,j}\Big)_{\substack{\ell \in \cV\\j \in \cV}}  \mbox{ and } 
		\bldD \cdot \hat{\AAA} = \Big( \theta_{\ell,j} \Big)_{\substack{\ell \in
        \cV\\j \in [n]}} \; .
		\label{eq:matrices}
		\end{equation*}
		
    By using tables in~(\ref{table:star_addition}) and~(\ref{table:star_multiplication}), for all $i \in \cV, j \in [n]$, we have
    \begin{equation}
        \theta_{\ell,j} = \sum_{i \in \cV} d_{\ell,i} \cdot \hat{a}_{i,j}.
    \label{eq:theta-entry}
		\end{equation}

    Assume that there is an edge $(i,\ell) \in \cE$ for some $i \in \cV$, and that the node $i$ has $x_j$.
    Then, $d_{\ell,i} \neq 0$ and $\hat{a}_{i,j} = \mbox{`$\star$'}$. In that case, we obtain $\theta_{\ell,j} = \mbox{`$\star$'}$. 
		This correctly represents the situation that the node $i$ delivers $x_j$ to the node $\ell$. 

    We conclude that the matrix $(\bldD \cdot \hat{\AAA}) \otimes \bldone_n$ correctly represents the
    possession matrix of the graph $\graph$ after one round of execution of Protocol $\sP$.
\end{proof}
\medskip

Lemma~\ref{lemma:family_change} can be naturally extended to protocols with several communications rounds. 
In the sequel, we denote by $\AAA^{(i)}$, $i \in \nn$, the possession matrix after the $i$-th round of the protocol.
For the sake of convenience, we also use the notation $\AAA^{(0)} = \AAA$. 

\begin{corollary}
    It is possible to choose integers $\tau_{i,\ell}$ and coefficients $\mu_{x, i, \ell, j}$ and $\mu_{x, \ell, j}$ in Protocol $\sP$, such that the possession matrix after the $i$-th round of the protocol execution is given by
    \[
        \AAA^{(i)} = (\bldD^i \otimes \bldE) \cdot \AAA^{(0)}.
    \]
\label{cor:possession-matrix}
\end{corollary}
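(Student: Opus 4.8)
The plan is to prove the identity by induction on the round index $i$, using Lemma~\ref{lemma:family_change} as the single-round engine and the mixed-product property of the tensor product to telescope consecutive rounds. For the base case, $i = 0$ holds because $\bldD^0$ is the $k \times k$ identity, so $(\bldD^0 \otimes \bldE)\cdot\AAA^{(0)}$ reduces to $\AAA^{(0)}$ once the integer scalars are absorbed via Remark~\ref{note:integer_multiplication}; the case $i = 1$ is precisely Lemma~\ref{lemma:family_change} applied to $\AAA^{(0)}$.

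For the inductive step, I would assume $\AAA^{(i-1)} = (\bldD^{i-1}\otimes\bldE)\cdot\AAA^{(0)}$ and note that $\AAA^{(i)}$ is obtained from $\AAA^{(i-1)}$ by running one additional round of Protocol $\sP$. Applying the single-round transformation of Lemma~\ref{lemma:family_change} to the current possession matrix $\AAA^{(i-1)}$ gives $\AAA^{(i)} = (\bldD\otimes\bldE)\cdot\AAA^{(i-1)}$. Substituting the inductive hypothesis and writing $\AAA^{(0)} = \hat{\AAA}\otimes\bldone_n$ as in the proof of Lemma~\ref{lemma:family_change}, I would compute
\[
\AAA^{(i)} = (\bldD\otimes\bldE)\big[(\bldD^{i-1}\hat{\AAA})\otimes\bldone_n\big] = (\bldD^{i}\hat{\AAA})\otimes(\bldE\bldone_n) = (\bldD^{i}\hat{\AAA})\otimes\bldone_n = (\bldD^{i}\otimes\bldE)\cdot\AAA^{(0)},
\]
where the third equality collapses the integer factor in $\bldE\bldone_n$ by Remark~\ref{note:integer_multiplication}, and the last equality simply reverses the factorization $\AAA^{(0)} = \hat{\AAA}\otimes\bldone_n$. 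This closes the induction. (Equivalently, one may combine the two operators directly via the mixed-product identity $(\bldD\otimes\bldE)(\bldD^{i-1}\otimes\bldE) = \bldD^{i}\otimes\bldE^2$ and then neutralize the scalar arising from $\bldE^2$ by the same remark.)

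The main obstacle is justifying that Lemma~\ref{lemma:family_change} is applicable at the intermediate round, and not only to the initial possession matrix $\AAA^{(0)}$. The lemma's proof crucially uses that each of the $k$ blocks of the possession matrix consists of $n$ identical rows, i.e. that the matrix factors as $\hat{\AAA}\otimes\bldone_n$. I would therefore verify that this structure is preserved across rounds: from the inductive hypothesis, $\AAA^{(i-1)} = (\bldD^{i-1}\otimes\bldE)(\hat{\AAA}\otimes\bldone_n) = (\bldD^{i-1}\hat{\AAA})\otimes(\bldE\bldone_n)$, and since $\bldE\bldone_n$ collapses to $\bldone_n$ (Remark~\ref{note:integer_multiplication}), we obtain $\AAA^{(i-1)} = (\bldD^{i-1}\hat{\AAA})\otimes\bldone_n$, which again has the required block form. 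With the block structure in place, the single-round argument applies verbatim, and the only remaining care is the bookkeeping of integer scalars, all of which are neutralized by the multiplication rules in~(\ref{table:star_addition}) and~(\ref{table:star_multiplication}).
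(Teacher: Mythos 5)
Your proof is correct and follows essentially the same route as the paper's: both iterate the single-round transformation of Lemma~\ref{lemma:family_change}, combine the operators via the mixed-product property of the tensor product, and neutralize the resulting integer scalars through Remark~\ref{note:integer_multiplication}. The only difference is organizational: the paper computes $(\bldD \otimes \bldE)^i \cdot \AAA^{(0)} = (\bldD^i \otimes \bldE^i)\cdot\AAA^{(0)} = n^{i-1}(\bldD^i\otimes\bldE)\cdot\AAA^{(0)}$ in one shot, using $\bldE^i = n^{i-1}\bldE$ and absorbing the scalar at the very end, whereas you telescope round by round, collapsing $\bldE\bldone_n = n\bldone_n$ to $\bldone_n$ at each step of an induction. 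Your explicit verification that the factorization $\AAA^{(i-1)} = (\bldD^{i-1}\hat{\AAA})\otimes\bldone_n$ (each block consisting of $n$ identical rows) is preserved across rounds is a point the paper leaves implicit when it writes $\AAA^{(i)} = (\bldD\otimes\bldE)^i\cdot\AAA^{(0)}$; making it explicit is a small but genuine gain in rigor, since Lemma~\ref{lemma:family_change} is stated for possession matrices of the form~(\ref{eq:possession-matrix}).
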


\begin{figure*}[t]
\begin{center}
    \begin{tabular}{l|ccccccc}
        \multicolumn{1}{c}{} & & & & & & & \\[\dimexpr-\normalbaselineskip-\arrayrulewidth]
        \hline
				\hline
        Range & $[1, 1.2)$ & $[1.2, 1.4)$ & $[1.4, 1.6)$ & $[1.6, 1.8)$ & $[1.8, 2.0)$ & $[2.0, \infty) $ \\
        \hline
        Occurrence, \% & 54 & 22 & 6 & 4 & 0 & 14 \\
				\hline
				\hline
    \end{tabular}
\caption{The efficiency of the algorithm for graphs of diameter $2$}
\label{fig:tworound}
\end{center}
\end{figure*}

\begin{figure*}[t]
 \begin{center}
   \begin{tabular}{l|ccccccc}
        \multicolumn{1}{c}{} & & & & & & & \\[\dimexpr-\normalbaselineskip-\arrayrulewidth]
        \hline
        \hline
				Range & $[1, 1.2)$ & $[1.2, 1.4)$ & $[1.4, 1.6)$ & $[1.6, 1.8)$ & $[1.8, 2.0)$ & $[2.0, \infty) $ \\
        \hline
				Occurrence, \% & 30 & 18 & 24 & 0 & 6 & 22 \\
				\hline
				\hline
    \end{tabular}
\caption{The efficiency of the algorithm for graphs of diameter $3$}
\label{fig:threeround}
\end{center}
\end{figure*}

\subsection{Data dissemination using rank optimization}

The following theorem is the main result of this section.

\begin{theorem}
    Let $\graph$ be an underlying directed graph of an $r_0$-solvable network defined by the
    adjacency matrix $\bldD^T$. Let $\AAA$ be the corresponding possession matrix of
    the network. Then there exists, for a suitable choice of $\tau_{i,\ell}$, $\mu_{x, i, \ell, j}$ and $\mu_{x, 1, \ell, j}$, Protocol $\sP$ with $r$ rounds, for any $r \ge r_0$, and $\tau$ transmissions, where
    \begin{equation}
        \tau \; = \; \sum_{i=1}^r \left( \min_{\bldAA^{(i)} \in (\bldDD^{i-1} \otimes
        \bldEE) \cdot \AAA} \left\{ \sum_{j \in \cV} \rank \left( \bldA_j^{(i)} \right) \right\} \right)
    \label{eq:num_transmissions}
		\end{equation}
    for matrices $\bldA^{(i)}$ which are subject to 
    \begin{multline} 
        \forall j \in \cV \, : \; \rank \left( \left[ \begin{array}{c} 
                \left(\diag \left( \bldD^{[j]}\right) \otimes \bldI \right) \cdot \bldA^{(i)} \\
								\hline
								\Gamma_j(( \bldD^{i-1} \otimes \bldE) \cdot \AAA) 
        \end{array} \right] \right) \\
        = 
        \MAXRANK \left( \left(\diag ( \blde_j ) \otimes \bldI \right) \cdot 
        \left( \bldD^i \otimes \bldE \right) \cdot \AAA \right) \, , 
        \label{equation:dep_condition}
    \end{multline}
    where the matrices $\bldI$ and $\bldE$ are both $n \times n$.
    \label{theorem:dep_impr}
\end{theorem}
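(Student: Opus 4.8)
The plan is to construct the required protocol one communication round at a time, invoking Corollary~\ref{cor:possession-matrix}, which guarantees that if every node attains the \emph{largest possible} possession in each round, then the possession matrix entering round $i+1$ is a concrete realization drawn from the family $(\bldD^i \otimes \bldE) \cdot \AAA$. The construction is greedy: in round $i$ we let the transmitters broadcast a minimal set of linear combinations that nevertheless pushes every node to its maximal reachable information. The result is then an existence (upper-bound) statement, so it suffices to exhibit such a protocol, verify that it satisfies all requests within $r \ge r_0$ rounds, and show its transmission count equals $\tau$ from~(\ref{eq:num_transmissions}).

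First I would reinterpret the data inside~(\ref{equation:dep_condition}). By the definition of $\bldD$, the factor $\diag(\bldD^{[j]}) \otimes \bldI$ retains exactly the blocks of $\bldA^{(i)}$ indexed by $\cN_{in}(j)$ and zeroes the rest, so the top block of the stacked matrix is precisely the collection of symbols that node $j$ \emph{receives} from its in-neighbours during round $i$; the bottom block $\Gamma_j((\bldD^{i-1}\otimes\bldE)\cdot\AAA)$ is a star-free matrix whose row space equals what $j$ already holds entering round $i$. The right-hand side $\MAXRANK((\diag(\blde_j)\otimes\bldI)\cdot(\bldD^i\otimes\bldE)\cdot\AAA)$ is the dimension of the full span of symbols that can reach $j$ in $i$ rounds. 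Thus~(\ref{equation:dep_condition}) says exactly ``after round $i$, node $j$ holds the maximal information reachable in $i$ rounds.'' With this reading each round becomes a single-round dissemination problem, and I would adapt the proof of Theorem~\ref{thm:coding}: over the admissible family $(\bldD^{i-1}\otimes\bldE)\cdot\AAA$ (the condition $\bldA^{(i)}\in(\bldD^{i-1}\otimes\bldE)\cdot\AAA$ encodes that each node broadcasts only combinations of symbols it currently possesses), the quantity $\sum_{j\in\cV}\rank(\bldA^{(i)}_j)$ counts the total round-$i$ transmissions, since node $j$ needs exactly $\rank(\bldA^{(i)}_j)$ independent broadcasts to realize the row space of its block. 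Minimizing this subject to~(\ref{equation:dep_condition}) is the same rank/row-space correspondence as in Theorem~\ref{thm:coding}, with the query-containment condition~(\ref{ex_condition}) replaced by the MAXRANK target.

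The crucial structural step is to decouple the rounds so that the overall count equals the sum of per-round minima in~(\ref{eq:num_transmissions}). Once node $j$ satisfies~(\ref{equation:dep_condition}), the realized row space of its possession is a full-dimensional subspace of the reachable span and hence coincides with it; therefore the star-free possession matrix after round $i$ has, block by block, the same row spaces as $\Gamma$ applied to $(\bldD^i\otimes\bldE)\cdot\AAA$, \emph{independently} of which minimizer $\bldA^{(i)}$ was chosen. Consequently the admissible family entering round $i+1$ is always $(\bldD^i\otimes\bldE)\cdot\AAA$, the per-round optimizations are independent, and the minima add to $\tau$. Correctness then follows from $r_0$-solvability: after $r_0$ rounds the MAXRANK targets force every node to possess every symbol reachable within $r_0$ rounds, and by $\cP_\ell\cup\cT_\ell=[n]$ together with Definition~\ref{def:r-solvable} these include all requested symbols. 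For $i>r_0$ every node already holds all $n$ symbols, so $\Gamma_j(\cdot)$ already has rank $n=\MAXRANK$ and the all-zero member of the family (all stars set to $0$) satisfies~(\ref{equation:dep_condition}) with zero transmissions; hence those summands vanish and every $r\ge r_0$ yields the same $\tau$.

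I expect the main obstacle to be this round-decoupling step: proving rigorously that enforcing the MAXRANK condition in round $i$ pins down the admissible family entering round $i+1$ to be exactly $(\bldD^i\otimes\bldE)\cdot\AAA$ regardless of the specific minimizer, so that greedily maximizing information each round does not sacrifice the claimed count. This requires carefully relating the row space of the realized star-free matrix to the star-pattern of the family, and verifying that achieving maximal rank forces equality of the realized span with the reachable span in each block — precisely the point where the star-algebra of Lemma~\ref{lemma:family_change} and the $\Gamma$-operator must be shown to interact correctly.
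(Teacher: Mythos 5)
Your proposal is correct and follows essentially the same route as the paper: a greedy round-by-round construction in which condition~(\ref{equation:dep_condition}) is read as ``each node ends round $i$ holding the maximal information reachable in $i$ rounds,'' the ranks of the blocks $\bldA^{(i)}_s$ count the per-round transmissions exactly as in Theorem~\ref{thm:coding}, and Lemma~\ref{lemma:family_change} / Corollary~\ref{cor:possession-matrix} govern the evolution of the possession family. The one substantive place where you and the paper distribute the work differently is feasibility: the paper devotes a separate result (Lemma~\ref{lemma:existance}) to exhibiting, by an explicit construction that places a single $1$ in a distinct row of each starred column, a member of the constrained family satisfying~(\ref{equation:dep_condition}); this is what guarantees that each minimum in~(\ref{eq:num_transmissions}) is taken over a non-empty set, so the theorem is not vacuous. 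Your proposal takes this for granted (``broadcast a minimal set of linear combinations that pushes every node to its maximal reachable information''), and you should close it explicitly --- for instance by observing that taking $\bldA^{(i)}_s = \Gamma$ applied to block $s$ of $(\bldD^{i-1}\otimes\bldE)\cdot\AAA$, i.e.\ uncoded forwarding of every possessed symbol, satisfies the constraint. Conversely, the round-decoupling step you flag as the main obstacle is in fact settled by the dimension argument you sketch (every row a node can realize lies in the span of the canonical vectors of its reachable symbols, and~(\ref{equation:dep_condition}) forces the realized row space to have full dimension in that span, hence to equal it, independently of which minimizer was used), and this is precisely the point the paper's own proof leaves implicit; likewise your observations that $r_0$-solvability together with $\cP_\ell\cup\cT_\ell=[n]$ yields correctness of the requests, and that every round beyond $r_0$ contributes zero transmissions (the all-zero member of the family satisfies the constraint once possession is complete), are useful details the paper does not spell out.
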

\vspace{-1ex}
The proof of this theorem appears in the appendix. 
\medskip

We note that Theorem~\ref{theorem:dep_impr} is an \emph{existence} result, and therefore it gives an upper bound on the optimal number 
of transmissions. It does not claim any optimality, though. In general, it is possible that the number of transmissions given by~(\ref{eq:num_transmissions})
is not optimal. 

\section{Experimental results}

In this section, we describe experimental study of the tightness 
of the bound in Theorem~\ref{theorem:dep_impr}. 
The instance of the problem consists of two main ingredients: the adjacency
matrix of the graph and the possession matrix of the network. 
We generate the adjacency matrix of the graph randomly, while fixing the number of vertices in
the graph and the diameter. We also generate randomly the possession matrix of the network. 

In general, enumeration of all the matrices in a matrix family has exponential complexity.
In order to facilitate this process, we use a randomized algorithm. It picks random matrices 
from a given matrix family, and then checks if that matrix satisfies the conditions of the theorem. 
We use two different types of networks: in the first case the diameter of the graph $\graph$ was two, 
and in the second case it was three. In both cases, the number of nodes was $4$ and the number of
information bits was $4$.

For each randomly chosen network, we compute the number of transmissions
guaranteed by Theorem~\ref{theorem:dep_impr} over the binary field
and the lower bound on the number of transmissions in Proposition~\ref{prop:lower-bound}. We compute the ratio of these 
two quantities. The tables in Figures \ref{fig:tworound} and \ref{fig:threeround} present the distribution of this ratio. 
In order to compute the maximum rank of a matrix family, we use the algorithm in~\cite{Geelen} (see also~\cite{Hartfiel}).

\section{Acknowledgements}

The work of the authors is supported by the Norwegian-Estonian Research Cooperation Programme through the research grant EMP133, by the Estonian Ministry of Education and Research through the research grants PUT405 and {IUT2-1}, by the European Regional Development Fund through the Estonian Center of Excellence in Computer Science, EXCS, and by the EU COST Action IC1104 on Random Network Coding and Designs over  $\ff_q$. The first author has received a scholarship through the IT Academy scholarship programme.


\section{Appendix}

\begin{proof} {\it (Lemma~\ref{lemma:shortest-path})}
\begin{enumerate}
\item
Consider a protocol, where in each round, each node broadcasts all the symbols $x_j$ that it has in its possession (including the messages that it received in the previous rounds). Pick some $\ell \in \cV$. Let $(v_0, v_1, v_2, \cdots, v_t = \ell)$ be a shortest path from $v_0$ to $\ell$ of length $t \le r$. Then, after $i$ rounds the node $v_i$ obtains all the symbols $x_j$ that $v_0$ has in its possession. Therefore,  
after $r$ rounds, $\ell$ has all the messages possessed by all the nodes in $\cV$. 
\item
Next, we show that $r-1$ rounds are not sufficient. Let $v_0$ and $\ell$ be two vertices, such that the shortest path between them 
is of length $r$. Denote this path  $(v_0, v_1, v_2, \cdots, v_r = \ell)$. Then, the shortest path between $v_0$ and $v_i$ is $i$ for all $i \in [r]$. 
Assume that $\cP_{v_0} = [n]$ and for any $i \in \cV \backslash \{ v_0 \}$, $\cP_i = \varnothing$. Assume that $n \ge 1$ and $\cT_\ell = \{ 1 \}$. 
Then, clearly, after one iteration $v_2$ can not know $x_1$ (because the shortest path from $v_0$ to $v_2$ is of length two, and the same symbol is not retransmitted within the same round.) More generally, for the same reason, for any $i \in [r-1]$, after $i$ iterations $v_{i+1}$ can not know $x_1$. 
\end{enumerate}
\end{proof}


Before we prove Theorem~\ref{thm:coding}, we formulate and prove the following two lemmas.  

\begin{lemma}{(\cite{Yossef-journal})}
\label{lemma:exists_orthogonal}
    Let $V$ be an ambient space and $W \subseteq V$ be a linear subspace.
    If there is a vector $\bldx \in V$ such that $\bldx \not\in W$, then there exists $\bldy \in W^\perp$ such that $\bldx \cdot \bldy
    \neq \bldzero$.
\end{lemma}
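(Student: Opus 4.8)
The plan is to produce the desired $\bldy$ as the vector that represents a carefully chosen linear functional, exploiting the fact that over $\ff^n$ with the standard dot product every linear functional is of the form $\bldu \mapsto \bldy \cdot \bldu$ for a unique $\bldy$. This turns the lemma into a routine functional-extension argument, and it is the cleanest way to pin down the required $\bldy \in W^\perp$ that detects $\bldx$.

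First I would use the hypothesis $\bldx \notin W$ to form the sum $W \oplus \vspan(\bldx)$, which is a genuine direct sum precisely because $\bldx \notin W$, so it has dimension $\dim W + 1$. On this subspace I would define a linear functional $f$ by declaring $f(\bldw) = 0$ for every $\bldw \in W$ and $f(\bldx) = 1$; directness of the sum is exactly what makes this assignment well defined. Next I would extend $f$ to a linear functional $F$ on the whole ambient space $\ff^n$, for instance by choosing any complement of $W \oplus \vspan(\bldx)$ and letting $F$ vanish on it.

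The key step is to invoke nondegeneracy of the standard inner product on $\ff^n$, which guarantees that $F$ can be written as $F(\bldu) = \bldy \cdot \bldu$ for some $\bldy \in \ff^n$. Taking this $\bldy$, I would then verify the two required properties directly: for each $\bldw \in W$ we get $\bldy \cdot \bldw = F(\bldw) = f(\bldw) = 0$, so $\bldy \in W^\perp$; and $\bldx \cdot \bldy = F(\bldx) = f(\bldx) = 1 \neq 0$, which is the asserted conclusion.

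The main point to watch is the ambient space in which orthogonality is measured: the argument succeeds because the form is nondegenerate on $\ff^n$, which is what supplies the functional-to-vector correspondence; the correct reading of the statement is therefore that the form is nondegenerate on $V$ (in particular $V = \ff^n$, as in the application). An equivalent route that avoids functionals is a dimension count: establish $\dim W + \dim W^\perp = n$, apply it twice to obtain $\dim (W^\perp)^\perp = \dim W$, combine this with the trivial inclusion $W \subseteq (W^\perp)^\perp$ to force equality, and then argue by contraposition that if $\bldx$ were orthogonal to all of $W^\perp$ it would lie in $(W^\perp)^\perp = W$, contradicting $\bldx \notin W$.
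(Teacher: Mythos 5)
Your proof is correct, but it takes a genuinely different route from the paper's. The paper argues by contradiction in two lines: if $\bldx \cdot \bldy = 0$ for every $\bldy \in W^\perp$, then $\bldx \in (W^\perp)^\perp = W$, contradicting $\bldx \notin W$ --- and it simply asserts the identity $(W^\perp)^\perp = W$ without justification. Your primary argument is instead constructive: you build the separating functional $f$ on $W \oplus \vspan(\bldx)$ (well defined exactly because $\bldx \notin W$ makes the sum direct), extend it to all of $\ff^n$ by killing a complement, and convert it into the vector $\bldy$ via the coordinate representation of functionals; this explicitly exhibits $\bldy \in W^\perp$ with $\bldx \cdot \bldy = 1$. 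What your approach buys is self-containedness and transparency about where nondegeneracy of the form enters; what the paper's approach buys is brevity, at the cost of leaning on $(W^\perp)^\perp = W$, which over a finite field is not automatic (it is proved by the very dimension count $\dim W + \dim W^\perp = n$ that you sketch in your alternative route --- so your second argument is, in effect, the paper's proof with its missing step filled in). Your closing caveat is also well taken and applies to the paper's proof even more than to yours: as stated, with an arbitrary ambient space $V \subseteq \ff^n$ and orthogonality computed inside $V$, the lemma is false over finite fields because of isotropic vectors --- e.g.\ $V = \vspan\{(1,1)\} \subseteq \ff_2^2$, $W = \{\bldzero\}$, where $W^\perp = V$ but every $\bldy \in V$ has $(1,1) \cdot \bldy = 0$ --- and in that example $(W^\perp)^\perp = V \neq W$, so the paper's asserted identity fails too. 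Both proofs are valid under the reading $V = \ff^n$, which is the only case used in the proof of Theorem~\ref{thm:coding}.
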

\begin{proof}
    Let $\bldx \in V$ be such that $\bldx \not\in W$. By contrary, assume that for all $\bldy \in W^\perp$ we have $\bldx \cdot \bldy
    = \bldzero$. 
		Then, $\bldx \in (W^\perp)^\perp = W$. This is in contradiction to the conditions of the lemma.  		
\end{proof}

\begin{lemma}
\label{lemma:orthogonal_subspace}
    Let $V$ be an ambient space and $W \subseteq V$ be its linear subspace. If $\bldx \in W^\perp$, then for every subspace $W' \subseteq W$ and for every vector $\bldy \in W'$ it holds $\bldx \cdot \bldy = \bldzero$.
\end{lemma}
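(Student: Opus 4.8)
The plan is to observe that this lemma is an immediate consequence of the definition of the orthogonal space $W^\perp$ together with the transitivity of set inclusion. The key point is that membership $\bldx \in W^\perp$ already guarantees orthogonality of $\bldx$ to \emph{every} vector of $W$, so the intermediate subspace $W'$ plays no essential role beyond supplying vectors that happen to lie inside $W$.

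First I would recall the defining property of the orthogonal space: by definition, $\bldx \in W^\perp$ means precisely that $\bldx \cdot \bldw = \bldzero$ for all $\bldw \in W$. Next I would invoke the hypothesis $W' \subseteq W$, which tells me that any vector $\bldy \in W'$ is automatically a vector of $W$ as well. Combining these two facts, I would instantiate the universally quantified orthogonality condition at the particular vector $\bldw = \bldy \in W$, which immediately yields $\bldx \cdot \bldy = \bldzero$, as required.

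I do not expect any genuine obstacle here, since the argument is nothing more than a direct unwinding of the definition of the orthogonal complement. The only point that requires any attention is to keep track of the inclusion chain $\bldy \in W' \subseteq W$, so that the orthogonality guaranteed by $\bldx \in W^\perp$ can legitimately be applied to the vector $\bldy$. Beyond this, no structural properties of $W'$ as a subspace (such as closure under addition or scalar multiplication) are actually used, and the same conclusion would hold even if $W'$ were merely an arbitrary subset of $W$ rather than a subspace.
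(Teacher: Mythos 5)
Your proof is correct and is essentially identical to the paper's own argument: both simply unwind the definition of $W^\perp$ and use the inclusion $\bldy \in W' \subseteq W$ to conclude $\bldx \cdot \bldy = \bldzero$. Your added observation that $W'$ need only be a subset, not a subspace, is accurate but does not change the argument.
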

\begin{proof}
    Let $\bldx \in W^\perp$. Pick any $\bldy \in W'$. Then $\bldy \in W$. We obtain that $\bldx \cdot \bldy = \bldzero$. 
\end{proof}
\medskip

\begin{proof} {\it (Theorem~\ref{thm:coding})}

    The statement of the theorem is proven in two steps. 
		
		\begin{enumerate} 
		\item
		We construct an exact coding scheme, which uses $\tau$ symbol transmissions over $\ff$. 
		We show that under this scheme, for all $\ell \in \cV$, the node $\ell$ can recover the bit $x_\eta$ for all $\eta \in \cT_\ell$.

    Let $\bldA \in \AAA$ be the matrix, which minimizes the value of $\tau$ in~(\ref{eq:tau-expression}). 
    Assume that Equation~(\ref{ex_condition}) holds, and take some $\eta \in \cT_\ell$, $\ell \in \cV$. 
		Then, 
		\[
		    \blde_\eta^T = \sum_{i \in \cN_{in} (\ell), \; j \in [n] } \alpha_{i,j} \bldA_i^{[j]} + 
				\sum_{ j \in [n] } \beta_{j} \bldP_\ell^{[j]} \; , 
    \]		
		where all $\alpha_{i,j}$ and $\beta_j$ are in $\ff$. Then, 
\begin{multline*}
		    x_\eta \; = \;  \blde_\eta^T \bldx  \\
				= \; \sum_{i \in \cN_{in} (\ell), \; j \in [n] } \alpha_{i,j} (\bldA_i^{[j]} \cdot \bldx) + 
				\sum_{ j \in [n] } \beta_{j} (\bldP_\ell^{[j]} \cdot \bldx) \; . 
\end{multline*}		
     Each (sending) node $i \in \cV$ will use some basis $\cB_i$ of the rowspace of $\bldA_i$, and will transmit the messages $(\bldb \cdot \bldx) \in \ff$ for all $\bldb \in \cB_i$. It is straightforward to verify that in such way each node transmits combinations of bits that it has in its possession. The total number of messages that the node $i$ transmits is $\rank(\bldA_i)$ and the total number of messages transmitted in the scheme is 
		\[
		\sum_{i \in \cV} \rank (\bldA_i) \; . 
		\]
		
		Each (receiving) node $\ell \in \cV$ will be able to compute the values  $(\bldA_i^{[j]} \cdot \bldx)$ for all $i \in \cN_{in} (\ell)$, $j \in [n]$, from the messages $(\bldb \cdot \bldx)$. It will also be able to compute the values $\bldP_\ell^{[j]} \cdot \bldx$ for all $j \in [n]$. Therefore, the node $\ell$ will be able to compute $x_\eta$, as required.  
		
	\item	
    We show that if there exists another linear code which satisfies the 
		requests of all the nodes in $\graph$, then it is possible to construct a corresponding matrix $\bldA$
    as in Equation~(\ref{ex_condition}), which satisfies the conditions of the theorem.

		
		Consider the transmission scheme with the optimal number of transmissions $\tauopt$. Assume that 
		for each $\ell \in \cV$, the node $\ell$ transmits $n_\ell$ messages of the form 
		$\blds_\ell^{(i)} \cdot \bldx$, $i \in [n_\ell]$, $0 \le n_\ell \le n$.  
		Here $\sum_{\ell \in \cV} n_\ell = \tauopt$ is the total number of transmissions. 
		
		

 Next, we show that for all $\ell \in \cV$ and for all $\eta \in [n]$, if $\eta \in \cT_\ell$ 
then the vector $\blde_\eta \in \ff^n$ belongs to $W_\ell \subseteq \ff^n$, where $W_\ell$ is the linear span of the vector set 
\[
\left( \bigcup_{\stackrel{j \in \cN_{in}(\ell)}{i \in [n_j]}} \left\{ \blds_j^{(i)} \right\} \right) \; \cup \; \left( \bigcup_{ j \in \cP_\ell} \left\{ \blde_j \right\} \right) \; . 
\]  
Fix some $\ell \in \cV$. By contrary, assume that $\blde_\eta \not\in W_\ell$. Then, by Lemma~\ref{lemma:exists_orthogonal}
there exists $\bldx \in W_\ell^\perp$ such that $\blde_\eta \cdot \bldx \neq 0$.


From the definition of $W_\ell$ and Lemma~\ref{lemma:orthogonal_subspace}, we have
    that $\bldx \cdot \blds_j^{(i)} = 0$ for all $j \in \cN_{in}(\ell)$, $i \in [n_j]$, and $\bldx \cdot \blde_j = 0$ for all
		$j \in \cP_\ell$. This means that:
    \begin{enumerate}
        \item[(i)] the transmitted messages $\bldx \cdot \blds_j^{(i)}$ are $0$ for every transmitter $j \in \cN_{in}(\ell)$, $i
            \in [n_j]$;
        \item[(ii)] the side information symbols $x_{i}$, $i \in \cP_\ell$, available to the node $\ell$ are all $0$.
    \end{enumerate}
    
    Thus, the node $\ell$ cannot distinguish between the information vector $\bldx$ and the zero vector $\bldzero$. However, $x_\eta \neq 0$. 
		Therefore, our assumption that $\blde_\eta \not\in W$ is false. We conclude that $\blde_\eta \in W_\ell$.

    Next, we construct the $n \times n$ matrices $\bldA_\ell$ for all $\ell \in \cV$. For that sake, we take 
		\[
		\bldA_\ell^{(i)} = \left\{ \begin{array}{ll} 
		  \blds_\ell^{(i)} & \mbox{ if } i \in [n_\ell] \\
		  \bldzero & \mbox{ otherwise }
			\end{array} \right. 
		\]
		We obtain that for each $\ell \in \cV$, $\rank(\bldA_\ell) \le n_\ell$, and therefore  
		\[
		\sum_{\ell \in \cV} \rank \left( \bldA_\ell \right) \le \tauopt \; . 
		\]
		
    By construction, the resulting $\bldA$ belongs to the family $\AAA$, and therefore the corresponding code satisfies 
    equation~(\ref{ex_condition}). We conclude that $\tau$ in expression~(\ref{eq:tau-expression}) is indeed the minimum number of
    transmissions.
    \end{enumerate}
\end{proof}
\medskip

\begin{proof} {\it (Proposition~\ref{prop:lower-bound})}

The proof of this proposition is straightforward: let $\ell \in \cV$ be the node that maximizes the expression~(\ref{eq:maximize}). 
Then, at least $\distance_\ell$ transmissions are needed in order to satisfy all the requests of $\ell$. 
\end{proof} 
\medskip

\begin{proof} {\it (Corollary~\ref{cor:possession-matrix})}

    We have:
    \begin{align*}
        \AAA^{(i)} & = (\bldD \otimes \bldE)^i \cdot \AAA^{(0)} \\
        & = (\bldD^i \otimes \bldE^i) \cdot \AAA^{(0)} \\
        & = (\bldD^i \otimes n^{i-1} \bldE) \cdot \AAA^{(0)} \\
        & = n^{i-1} (\bldD^i \otimes \bldE) \cdot \AAA^{(0)} \\
        & \stackrel{(\S)}{=} (\bldD^i \otimes \bldE) \cdot \AAA^{(0)}
    \end{align*}
    Here, the transition $(\S)$ holds due to Remark~\ref{note:integer_multiplication}. Thus, 
    any non-zero integer entry in $(\bldD^i \otimes \bldE)$ is mapped to the element $1 \in \FFF$, 
		and, therefore, the factor $n^{i-1}$ can be omitted.
\end{proof}
\medskip

Before we turn to proving Theorem~\ref{theorem:dep_impr}, we formulate and prove the following lemma.

\begin{lemma}
    \label{lemma:existance}
    Let $\graph$ be a directed graph defined by the adjacency matrix $\bldD^T$. Let
    the possession matrix family of the graph $\graph$ be $\AAA$ as defined in
    Equation~(\ref{eq:possession-matrix}). There exists a transmission
    matrix $\bldA \in \AAA$ such that 
    \begin{multline}
    \rank \left( \left[ \begin{array}{c} 
                \left(\diag \left( \bldD^{[j]}\right) \otimes \bldI \right) \cdot \bldA \\
								\hline
								\Gamma_j(\AAA) 
        \end{array} \right] \right) \\
        = 
    \MAXRANK \left( \left(\diag ( \blde_j ) \otimes \bldI \right) \cdot 
        \left( \bldD \otimes \bldE \right) \cdot \AAA \right) 
    \label{eq:exists-A}
		\end{multline}
    for all $j \in \cV$.
\end{lemma}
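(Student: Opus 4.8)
The plan is to exhibit an explicit witness, namely the fully completed matrix $\bldA = \Gamma(\AAA)$ obtained by applying the operator $\Gamma$ blockwise, and to verify that it satisfies~(\ref{eq:exists-A}) for all $j \in \cV$ at once. First I would record the one property of this witness that drives everything. Since $\Gamma(\AAA_i)$ only replaces the symbols `$\star$' of $\AAA_i$ by $0$ and by canonical unit vectors, it is a legal member of the family, so $\bldA = \Gamma(\AAA) \in \AAA$; and writing $P_i \triangleq \vspan\{ \blde_{j'} : j' \in \cP_i \}$ for the coordinate subspace indexed by $\cP_i$, the definition of $\Gamma$ fills the first $|\cP_i|$ rows of $\bldA_i = \Gamma(\AAA_i)$ with a basis of $P_i$ and zeros out the remaining rows, so $\rowspace(\bldA_i) = P_i$ exactly.

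Next I would reduce the right-hand side of~(\ref{eq:exists-A}) to a cardinality. By Lemma~\ref{lemma:family_change}, $(\bldD \otimes \bldE)\cdot \AAA$ is the post-round possession matrix, and left multiplication by $\diag(\blde_j)\otimes \bldI$ extracts its $j$-th block $\AAA_{+,j}$ (placing zeros in all other blocks, which do not change the rank). From the entrywise computation in~(\ref{eq:theta-entry}), $\AAA_{+,j}$ carries a `$\star$' in column $j'$ of every row precisely when $j' \in S_j \triangleq \bigcup_{i \in \cN_{in}(j)} \cP_i$, and $0$ in all other columns. Completing these stars to the canonical basis of the coordinate subspace on $S_j$ gives $\MAXRANK(\AAA_{+,j}) = |S_j|$, the upper bound being forced by the zero columns outside $S_j$.

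Then I would evaluate the left-hand side for the chosen $\bldA$. The matrix $(\diag(\bldD^{[j]})\otimes \bldI)\cdot \bldA$ has $i$-th block equal to $(\bldD)_{j,i}\,\bldA_i$, i.e.\ $\bldA_i$ when $i \in \cN_{in}(j)$ and $\bldzero$ otherwise; stacking it with $\Gamma_j(\AAA) = \Gamma(\AAA_j)$ produces a matrix whose row space is $\sum_{i \in \cN_{in}(j)} \rowspace(\bldA_i) + \rowspace(\Gamma(\AAA_j)) = \sum_{i \in \cN_{in}(j)} P_i + P_j$. Because each $P_i$ is a coordinate subspace, this sum is the coordinate subspace indexed by $\cP_j \cup S_j$, so the rank on the left of~(\ref{eq:exists-A}) equals $|\cP_j \cup S_j|$. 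Finally, since a node never loses a symbol it already holds — equivalently $j \in \cN_{in}(j)$, as reflected by the unit diagonal of $\bldD$ that makes $(\bldD^i \otimes \bldE)\cdot\AAA$ the \emph{cumulative} possession matrix of Corollary~\ref{cor:possession-matrix} — we have $\cP_j \subseteq S_j$ and hence $|\cP_j \cup S_j| = |S_j|$. This equates the two sides for every $j$ and proves the lemma.

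The only genuine subtlety I anticipate is the simultaneity: one fixed $\bldA$ must realise the required identity for all $j$ at once, and a priori the blocks $\bldA_i$ are shared among the stacked matrices of many different $j$'s. The reason there is no conflict — and the point I would stress — is that the requirement each $j$ imposes on a block is merely that $\rowspace(\bldA_i)$ be as large as possible, i.e.\ equal to $P_i$, a single $j$-independent condition met simultaneously by $\Gamma$; and coordinate subspaces add without cancellation, so stacking the full blocks can cause no fortuitous rank drop. I would note that a genericity argument (random completions over a large enough field) would likewise force $\rowspace(\bldA_i) = P_i$, but the constructive choice $\Gamma(\AAA)$ uses only $0/1$ entries and therefore works over any field, in particular over $\ff_2$.
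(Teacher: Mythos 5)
Your proof is correct, and it takes a genuinely cleaner route than the paper's. The paper reduces the right-hand side in the same way you do --- to the number of `$\star$' entries in the nonzero row of $(\diag(\blde_j)\cdot\bldD\cdot\hat{\AAA})\otimes\bldone_n$, i.e.\ to $|S_j|$ in your notation --- but its witness is different: for each \emph{fixed} $j$ it places a single $1$ per column indexed by $S_j$, in pairwise distinct rows of the blocks $\bldA_i$ with $i \in \cN_{in}(j)$ (a partial-permutation completion), sets all other entries of $\bldA$ to $0$, and then argues the inclusion $\rowspace\left(\Gamma_j(\AAA)\right) \subseteq \rowspace\left(\left(\diag\left(\bldD^{[j]}\right)\otimes\bldI\right)\cdot\bldA\right)$ using the unit diagonal of $\bldD$ --- the same fact ($\cP_j \subseteq S_j$) you invoke at the end. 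The important difference is precisely the point you flag as the ``genuine subtlety'': because the paper's construction concludes with ``we set the values of all other elements in $\bldA$ to $0$'' for one fixed $j$, the matrix it builds is $j$-dependent, and the paper never explains why a \emph{single} $\bldA$ satisfies~(\ref{eq:exists-A}) for all $j$ simultaneously, which is what the lemma asserts and what the proof of Theorem~\ref{theorem:dep_impr} actually uses (one $\bldA^{(i)}$ per round, constrained for every $j \in \cV$). Your choice $\bldA = \Gamma(\AAA)$, the maximal blockwise completion with $\rowspace(\bldA_i)=P_i$ for every $i$, is $j$-independent by construction, and the coordinate-subspace count $|\cP_j \cup S_j| = |S_j|$ then settles all $j$ at once; so your argument buys a rigorous treatment of the simultaneity that the paper glosses over, at no extra cost. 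Do note that both arguments rest on the same assumption, stated only inside the paper's proof, that $\bldD$ has ones on its main diagonal (equivalently $j \in \cN_{in}(j)$); without it, your term $|\cP_j \cup S_j|$ can strictly exceed $|S_j|$ and the lemma as stated would fail, so it is worth making that hypothesis explicit.
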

\begin{proof}
We analyze the left and the right-hand side of equation~(\ref{equation:dep_condition}) separately. 
\begin{enumerate}
\item
    The right-hand side of equation~(\ref{equation:dep_condition}) can be written as
    \begin{eqnarray*}
        && \hspace{-8ex} (\diag(\blde_j) \otimes \bldI) \cdot (\bldD \otimes \bldE) \cdot \AAA \\
				& = & (\diag(\blde_j) \otimes \bldI) \cdot (\bldD
        \otimes \bldE) \cdot (\hat{\AAA} \otimes \bldone_n) \\
        & = & (\diag(\blde_j) \cdot \bldD \cdot \hat{\AAA}) \otimes (\bldI \cdot \bldE \cdot \bldone_n) \\
        & = & (\diag(\blde_j) \cdot \bldD \cdot \hat{\AAA}) \otimes (n \bldone_n) \\
        & \stackrel{(\P)}{=} & (\diag(\blde_j) \cdot \bldD \cdot \hat{\AAA}) \otimes \bldone_n \; . 
    \end{eqnarray*}
    The equality $(\P)$ holds because $n>0$ and all non-zero integers are mapped
    to field element $1$, thus we can omit the factor $n$.

    By employing the notation in~(\ref{eq:matrices}), the equation~(\ref{eq:theta-entry}) holds. Then, 
		the $j$-th row of the matrix $\diag(\blde_j) \cdot \bldD \cdot \hat{\AAA}$ is $(\theta_{j,\eta})_{\eta \in [n]}$. We have:
    \begin{multline}
        \diag(\blde_j) \cdot \bldD \cdot \hat{\AAA} = \\
				\left[ \begin{array}{ccc}
            0 &  \ldots & 0 \\
            \vdots & \ddots & \vdots \\
            0 &  \ldots & 0 \\
            \sum_{i \in \cV} d_{j,i} \cdot \hat{a}_{i,1} &  \dots & \sum_{i \in \cV} d_{j,i} \cdot \hat{a}_{i,n} \\
            0 & \ldots & 0 \\
            \vdots &  \ddots & \vdots \\
            0 &  \ldots & 0
        \end{array} \right] \; .
        \label{equation:theorem_rhs}
    \end{multline}

    Next, it is straightforward to verify that the $\MAXRANK$ of the matrix family 
		$(\diag(\blde_j) \cdot \bldD \cdot \hat{\AAA}) \otimes \bldone_n$ is the
    number of the symbols `$\star$' in the non-zero row of the matrix in
    Equation~(\ref{equation:theorem_rhs}). 

\item
    Consider the upper-block part of the matrix in the left-hand side of~(\ref{equation:dep_condition}). Denote 
		\[
		\bldA = \left( \tilde{a}_{i,j} \right)_{\stackrel{i \in [kn]}{j\in [n]}} \; . 
		\]
		In the sequel, we show that the values of the elements $\tilde{a}_{i,j}$ in $\bldA$ can be chosen such that the equation~(\ref{equation:dep_condition}) holds.  

    The matrix $\diag(\bldD^{[j]}) \otimes \bldI$ is the diagonal block matrix, namely, 
    \[
        \diag(\bldD^{[j]}) \otimes \bldI =
        \left[ \begin{array}{c|c|c|c}
            \bldD_{j,1} & \bldzero_n & \cdots & \bldzero_n \\
            \hline
						\bldzero_n & \bldD_{j,2} &\cdots & \bldzero_n \\
  					\hline
            \vdots & \vdots & \ddots & \vdots \\
            \hline
            \bldzero_n & \bldzero_n & \cdots & \bldD_{j,k}  
        \end{array}\right] \; , 
    \]
where for all $i \in \cV$, $\bldD_{j,i}$ is a diagonal $n \times n$ matrix as follows: 
    \[
    \bldD_{j,i} = \left[ \begin{array}{cccc}
            d_{j,i} & 0 & \cdots & 0 \\
            0 & d_{j,i} & \cdots & 0 \\
            \vdots & \vdots & \ddots & \vdots \\
            0 & 0 & \cdots & d_{j,i} \\
        \end{array}\right] \; , 
		\]
		and $\bldzero_{n}$ is an $n \times n$ all-zero matrix. 

    Then, $\; (\diag(\bldD^{[j]}) \otimes \bldI) \cdot \bldA = $
    \begin{equation}
        \hspace{-0.8ex} \left[ \begin{array}{ccc}
            d_{j,1} \cdot \tilde{a}_{1,1} & \cdots & d_{j,1} \cdot \tilde{a}_{1,n} \\
            \vdots & \ddots & \vdots \\
            d_{j,1} \cdot \tilde{a}_{n,1} & \cdots & d_{j,1} \cdot  \tilde{a}_{n,n} \\
						&& \\
						\hline
            && \\
						\vdots & \ddots & \vdots \\
            && \\
						\hline
						&& \\
						d_{j,k} \cdot \tilde{a}_{(k-1)n+1,1} & \cdots & d_{j,k} \cdot \tilde{a}_{(k-1)n+1,n} \\
            \vdots & \ddots & \vdots \\
            d_{j,k} \cdot \tilde{a}_{kn,1} & \cdots & d_{j,k} \cdot \tilde{a}_{kn,n} \\
        \end{array} \right] .
        \label{equation:theorem_lhs}
    \end{equation}

    The element in the $j$-th row and $\ell$-th column in the matrix in
    Equation~(\ref{equation:theorem_rhs}) is `$\star$' if there exists $i$ such
    that $d_{j,i} \neq 0$ and $\hat{a}_{i,\ell}= \mbox{`$\star$'}$. In that case, we can pick
    $s \in [n]$ and set $\tilde{a}_{(i - 1)n + s, \ell}$ to $1$. We obtain that
    $d_{j, i} \cdot \tilde{a}_{(i-1)n + s, \ell} \neq 0$.
    
    Since $s \in [n]$, different $s$ can be chosen for every $\ell \in [n]$. 
		After setting $\tilde{a}_{(i - 1)n + s, \ell}$ to $1$ in every column $\ell \in [n]$, 
		we set the values of all other elements in $\bldA$ to $0$. 
		Because the ones in the matrix in Equation~(\ref{equation:theorem_lhs}) are all in the distinct rows and in
    the distinct columns, the rank of the matrix $(\diag(\bldD^{[j]}) \otimes \bldI) \cdot \bldA$
    equals to the $\MAXRANK$ of the family $(\diag(\blde_j) \cdot \bldD \cdot \hat{\AAA}) \otimes \bldone_n$.

    If $\hat{a}_{j, \ell}=\mbox{`$\star$'}$, then $a_{(j-1) n + s, \ell}= \mbox{`$\star$'}$, for $s \in
    [n]$, and only these elements are set to $1$ in $\bldA$. Therefore, $\bldA \in \AAA$.

  From the construction of $\bldA \in \AAA$, we have that $\left(\diag
    \left( \bldD^{[j]} \right) \otimes \bldI \right) \cdot \bldA$ has a single
    one in some row, for every column where there is \mbox{`$\star$'} in
    $\left(\diag ( \blde_j ) \otimes \bldI \right) \cdot \left( \bldD \otimes
    \bldE \right) \cdot \AAA$. The transposed adjacency matrix $\bldD$ has ones in the
    main diagonal. Therefore, if there exists a column with `$\star$' in
    $(\diag(\blde_j) \cdot \bldD \cdot \hat{\AAA}) \otimes \bldone_n$, then there
    also exists \mbox{`$\star$'} in the same column of $\AAA$.
    
    Thus, if there is a single one in a row in any column of $\left(\diag
    \left( \bldD^{[j]}\right) \otimes \bldI \right) \cdot \bldA$, then there
    is a single one in a row in the corresponding column of $\Gamma_j(\AAA)$. As
    \begin{multline*}
     \rank\left(\left(\diag \left( \bldD^{[j]}\right) \otimes
     \bldI \right) \cdot \bldA\right) \\ 
		= \; \MAXRANK \left( \left(\diag ( \blde_j ) \otimes
     \bldI \right) \cdot \left( \bldD \otimes \bldE \right) \cdot \AAA \right) \; ,
    \end{multline*}
    then 
		\begin{multline*}
		\rowspace\left(\Gamma_j(\AAA)\right) \\
		\subseteq \; \rowspace \left( \left(\diag \left(
    \bldD^{[j]}\right) \otimes \bldI \right) \cdot \bldA\right) \; , 
		\end{multline*}
		and condition~(\ref{eq:exists-A}) holds.
\end{enumerate}
\end{proof}
\medskip

The proof of the last lemma showed that the transmission matrix $\bldA$ exists. 
However, it may not be optimal. We next turn to proving Theorem~\ref{theorem:dep_impr}.
\medskip

\begin{proof}{\it (Theorem~\ref{theorem:dep_impr})} 

    From Lemma~\ref{lemma:existance}, there exist matrices $\bldA^{(i)}$ satisfying~(\ref{equation:dep_condition}).
    Take any such matrices, and write them as 
		\[
		\bldA^{(i)} = \Big( a^{(i)}_{\rho,\eta} \Big)_{\mbox{\scriptsize $\begin{array}{c} \rho \in [kn] \\ \eta \in [n] \end{array}$}} \; . 
		\]

    For all $s \in \cV$, let the vectors 
    \[
		\boldsymbol{t}^{(i)}_{s,r}=\left( t^{(i)}_{s,r,1}, t^{(i)}_{s,r,2}, \ldots,t^{(i)}_{s,r,n} \right) \; 
		\] 
		be the $r$-th row of $\bldA^{(i)}_s$. These vectors can be viewed as the linear
    coefficients multiplying the symbols transmitted by the node $s$ during the $i$-th
    round of the protocol. In the $i$-th round of the protocol, the
    messages transmitted by the node $s$ are given by the non-zero vectors in 
    \begin{equation}
        \bldUpsilon^{(i)}_{s} = \left\{ \sum_{m \in [n]} t^{(i)}_{s,r,m} \cdot x_m \right\}_{r \in [n]} 
        = \Bigg\{ \boldsymbol{t}^{(i)}_{s,r} \cdot \boldsymbol{x}  \Bigg\}_{r \in [n]} \; ,
        \label{equation:theorem_transmitted_messages}
    \end{equation}
    where $\boldsymbol{x} = (x_1, x_2, \cdots, x_n)^T$. 

    The number of transmissions of the node $s$ during the $i$-th round of the protocol is the rank of
    $\bldA^{(i)}_s$. When summing for all $s \in \cV$, we obtain the number of transmissions $\tau$ as stated in
    the right-hand side of Theorem~\ref{theorem:dep_impr} with respect to this $\bldA^{(i)}$. 

		Observe that the node $\ell$ receives all the messages 
		from the node $s$ if $d_{\ell,s} \neq 0$. 
    Therefore, the node $\ell$ receives all the messages of the form 
    \[
        d_{\ell,s} \cdot \sum_{m \in [n]} \left( t^{(i)}_{s,r,m} \cdot x_m \right) \; ,
    \]
    for all $s \in \cV$, $r \in [n]$.

    Since $t^{(i)}_{s,r,m} = a^{(i)}_{(s-1)n+r,m}$, the messages
    received by the node $t$ are the entries of the vector given by: 
    \begin{align}
        \left[ \begin{array}{ccc}
            d_{1,1} \cdot a^{(i)}_{1,1} & \cdots & d_{1,1} \cdot a^{(i)}_{1,n} \\
            \vdots & \dots & \vdots \\
            d_{1,1} \cdot a^{(i)}_{n,1} & \cdots & d_{1,1} \cdot a^{(i)}_{n,n} \\
            && \\
						\hline
						&& \\
						\vdots & \ddots & \vdots \\
            && \\
						\hline
						&& \\
						d_{k,1} \cdot a^{(i)}_{(k-1)n+1,1} & \cdots & d_{k,1} \cdot a^{(i)}_{(k-1)n+1,n} \\
            \vdots & \ddots & \vdots \\
            d_{k,1} \cdot a^{(i)}_{kn,1} & \cdots & d_{k,1} \cdot a^{(i)}_{kn,n} 
        \end{array} \right] \cdot 
        \left[ \begin{array}{c}
            x_1 \\
						x_2 \\
            \vdots \\
            x_n
        \end{array} \right] \; .
        \label{equation:received_matrix}
    \end{align}

    From Lemma~\ref{lemma:family_change}, the matrix family $(\bldD^i \otimes \bldE) \cdot \AAA$
    is the possession matrix of the network after the round $i$. Thus, the matrix family
    $(\diag(\blde_j) \otimes \bldI) \cdot (\bldD^i \otimes \bldE) \cdot \AAA$ is the possession
    matrix of the node $j$ after the round $i$. The $\MAXRANK$ of this matrix
    family is the number of symbols the node $j$ has after the completion of the 
    $i$-th round of the protocol.

    To this end, the matrices $\bldA^{(i)}$ as above satisfy the condition~(\ref{equation:dep_condition}), and the 
		number of transmission in the protocol based on it is given by the right-hand side of the equality~(\ref{eq:num_transmissions}). 
    Therefore, in order to minimize the number of transmissions in the protocol, one has to choose the matrices $\bldA^{(i)}$ that 
		satisfy~(\ref{equation:dep_condition}) and minimize the right-hand side of the equality~(\ref{eq:num_transmissions}).
\end{proof}

\end{document}